\newtheorem{theorem}{Theorem}[section]
\newtheorem{lemma}[theorem]{Lemma}
\newcommand{\adasum}{Adasum\xspace}
\newcommand{\horovodsum}{sum\xspace}
\newcommand{\Horovodsum}{Sum\xspace}
\newcommand{\norm}[1]{\left\lVert#1\right\rVert}
\newcommand{\mdoubleplus}{\mathbin{{+}\mspace{-6mu}{+}}}
\newcommand{\twodots}{\mathinner {\ldotp \ldotp}}
\newcommand{\ggt}{GGT}
\newcommand{\resnet}{ResNet-50\xspace}
\newcommand{\bertlarge}{BERT-Large\xspace}
\newcommand{\etal}{\textit{et al}.}
\newcommand{\squad}{SQuAD\xspace}
\newcommand{\tf}{TensorFlow\xspace}
\newcommand{\torch}{PyTorch\xspace}
\begin{document}

\date{}

\title{\Large \bf Scaling Distributed Training with Adaptive Summation}

\author{
{\rm Saeed Maleki, Madan Musuvathi, Todd Mytkowicz, Olli Saarikivi}\\
Microsoft Research
\and
{\rm Tianju Xu, Vadim Eksarevskiy, Jaliya Ekanayake, Emad Barsoum}\\
Microsoft\\
\{saemal, madanm, toddm, olsaarik, tix,vaeksare,jaliyaek,ebarsoum\}@microsoft.com
} 

\maketitle

\begin{abstract}
  Stochastic gradient descent (SGD) is an inherently sequential
  training algorithm--computing the gradient at batch $i$ depends on
  the model parameters learned from batch $i-1$.  Prior approaches
  that break this dependence do not honor them (e.g., sum the
  gradients for each batch, which is not what sequential SGD would do)
  and thus potentially suffer from poor convergence. This paper
  introduces a novel method to {\emph combine} gradients called
  \adasum~(for adaptive sum) that converges faster than prior work.
  \adasum is easy to implement, almost as efficient as simply summing
  gradients, and is integrated into the open-source toolkit Horovod.

  This paper first provides a formal justification for \adasum and
  then empirically demonstrates \adasum is more accurate than prior
  gradient accumulation methods.  It then introduces a series of
  case-studies to show \adasum works with multiple frameworks,
  (\tf and \torch), scales multiple optimizers (Momentum-SGD,
  Adam, and LAMB) to larger batch-sizes while still giving good
  downstream accuracy. Finally, it proves that \adasum{} converges.

  To summarize, \adasum scales Momentum-SGD on the MLPerf Resnet50
  benchmark to 64K examples before communication (no MLPerf entry
  converged with more than 16K), the Adam optimizer to 64K examples
  before communication on BERT-LARGE (prior work showed Adam stopped
  scaling at 16K), and the LAMB optimizer to 128K before communication
  on BERT-LARGE (prior work used 64K), all while maintaining
  downstream accuracy metrics.  Finally, if a user does not need to
  scale, we show LAMB with \adasum on BERT-LARGE converges in 30\%
  fewer steps than the baseline.
\end{abstract}

\section{Introduction}Recent trends in deep learning demonstrate that increasing model size, coupled with an increase in training data, results in improved model performance. 
This has led to progressively larger models, such as BERT~\cite{bertorg}, GPT-2~\cite{gpt2}, Megatron~\cite{megatron}, and UniLM~\cite{unilm}. 
This trend along with the end of Moore's law means that these large models require massively parallel architectures to train. 
An important source of parallelism in training is data parallelism where individual nodes train on a subset of data and periodically exchange model updates.
Unfortunately, this is at odds with the sequential nature of stochastic gradient descent (SGD) which is the most common algorithm to train them.
Some prior approaches break this sequential dependence with asynchronous SGD~\cite{Dean2012,hogwild} where individual nodes asynchronously update a global model ignoring potential staleness of model updates. 
Recent advances in hardware with powerful compute nodes with fast interconnects~\cite{dgx1,dgx2,tpulatest} have led to synchronous SGD where one trains with very large minibatch sizes. 
Neither approach is a panacea as both staleness and naively increasing minibatch sizes reduces model convergence~\cite{Keskar2016,Hoffer2017,directsum}

This paper proposes a new approach to data parallelism based on two key insights.  
First, rather than asynchronously updating a global model or increasing the minibatch size, this approach attempts to {\em emulate} a sequential execution in parallel. 
The basic idea is to combine the individual model updates from nodes, each obtained by running a (small) minibatch from a starting model, into an update that would have resulted had these nodes run one after the other from the same starting model. 
Second, this sequential emulation allows us to \emph{sample multiple paths} simultaneously. 
Intuitively, SGD is a stochastic process with each path representing a sample of the possible outcomes. 
By sampling many paths we dramatically reduce the variance of the estimate.  

This paper shows (in Section~\ref{sec:background}) that the following combiner achieves the two properties above:
$$ \adasum(g_1, g_2) = (1 - \frac{g_1^T \cdot g_2}{2\cdot \|g_1\|^2}) g_1 + (1 - \frac{g_1^T \cdot g_2}{2\cdot \|g_2\|^2})g_2 $$
Here $g_1$ and $g_2$ are gradients from individual minibatches, $g_1^T \cdot g_2$ is their dot product, and $\|g\|^2$ represents the norm of the vector $g$. 
This combiner when recursively applied on gradients from all nodes generates a final gradient which can be used to update the starting model with an appropriate learning rate. 
We call this approach \adasum as the combiner represents an adaptive sum of the two gradients with the gradients scaled by an appropriate constant. 


\adasum achieves significant algorithmic efficiency with large batch sizes when compared to synchronous SGD by requiring far less number of epochs to converge to the same loss or model performance. 
This remains true even when using various learning-rate {\em optimizers}, such as Momentum-SGD~\cite{momentum}, Adam~\cite{adam}, and LAMB~\cite{lamb}. 
Alternately, one can use the improved algorithmic efficiency to scale to a much larger effective batch size. 
For example, for Resnet50 \adasum enables Momentum SGD optimizer to converge with an effective batch size of 64K, which is four times larger than the largest batch size we have seen reported for Resnet50 as per MLPerf v0.5 submissions. 
Similarly, for BERT-Large, \adasum enables the Adam optimizer to scale to an effective batch size of 128K. 
Lack of convergence of Adam beyond 16K was the motivation for more sophisticated optimizers such as LARS and LAMB. 
When combined with LAMB, which is the state of the art optimizer for BERT-Large, \adasum converges in 20\% fewer epochs than with LAMB alone when using an effective batch size of 64K. 
In addition, we demonstrate that LAMB with \adasum can also scale to 128K effective batch size. 
These results indicate that \adasum emulates the behavior of much smaller batch size even when running with large batch sizes.

A desirable property of the \adasum operation, as evident from the equation above, is that it has no hyperparameters. 
In all our experiments, we simply reused the recommended hyper-parameters for the baseline synchronous SGD. 
The only additional tuning \adasum entails is a search for suitable base learning rate. 
 
In summary, the contributions of this paper are:
\begin{itemize}
\item \adasum, a new way to combine gradients that scales synchronous SGD to
unprecedented batch sizes, and a proof of its convergence.
\item A detailed discussion of how \adasum is implemented in Horovod, a popular distributed
training framework for \torch and \tf.
\item An evaluation that demonstrates \adasum scales existing
optimizers well beyond what was possible in prior work.  For example,
we demonstrate Adam can scale to 64K examples per allreduce on BERT (16K
before), LAMB to 128K examples per allreduce on BERT (64K before), and
Momentum-SGD to 64K examples per allreduce on ResNet-50 (16K before).  All while
maintaining downstream accuracy and with little hyper-parameter
tuning.
\item An evaluation that demonstrates for similar effective batch
sizes as in prior work, \adasum converges faster than that prior work.
For example, LAMB with \adasum on 64K examples per allreduce on BERT
converges in 30\% fewer steps than LAMB when just averaging gradients.
\end{itemize}



\section{Background}\label{sec:background}
This section provides the background for the the paper, introducing notation and concept used throughout. 

\subsection{Stochastic Gradient Descent}
Machine learning involves learning a model parameterized by a set of weights $w$ based on some training data. 
Given a set of training examples $\{(x_1,y_1),\dots,(x_k,y_k)\}$ where each $x_i$ is a vector representing the input
instance $i$ and $y_i$ is its label, training involves finding a $w$ that minimizes some loss function $L = \sum L_i(w, x_i, y_i)$, 
the sum of individual loss function $L_i$ of the model $w$ on input $(x_i, y_i)$. 

Most training uses stochastic gradient descent~(SGD). SGD starts from an appropriately initialized model $w_0$ and progressively updates the model at step $i$ as
$w_{i+1} = w_i - \alpha_i g_i$. Here $\alpha_i$ is the learning rate at this step as determined by a learning rate schedule, 
and $g_i=\frac{1}{b}\sum_{j=1}^b\nabla L_{j}(w_i)$ is the sum of gradients of individual loss functions for a randomly chosen {\em minibatch} of data of size $b$. 
The stochasticity of SGD arises because $g_i$ is only an estimate of the true gradient of the loss function at $w_i$. 
For deep neural networks, the gradients can be computed by the backpropagation algorithm~\cite{backprop} that requires a forward and a backward evaluation of the model.

\subsection{Synchronous SGD}
As models get larger, training requires parallelizing them on distributed hardware.
While there are other important sources of parallelism such as model parallelism and pipeline parallelism, these techniques are orthogonal to the data parallelism studied in this paper.

A common approach to data parallelism is {\em synchronous SGD}, where one computes the gradients for each dataset in a minibatch in parallel. This process works as follows. 
Each node processes a {\em microbatch} of data and computes its local gradient. The microbatch size is usually determined by the amount of memory available in the local node. 
Then a communication step sums up the local gradients from all nodes to compute the minibatch gradient. This process is called {\em allreduce}, after the MPI primitive that computes the sum of vectors from each node and stores the result in all the nodes.  Optionally, each node can perform {\em gradient accumulation} to sum up multiple microbatches before communicating with other nodes.
\adasum, the technique proposed in this paper, replaces allreduce for improved parallelism. 

\subsection{Algorithmic Efficiency vs System Efficiency}
One way to increase data parallelism is to increase the minibatch size. However, this has the effect of reducing stochasticity of SGD resulting in reduced convergence. Thus, there is a delicate balance between parallelism and model performance in distributed ML training. To capture this tradeoff, we define two notions. {\em System Efficiency} represents the raw throughput of the system in the amount of training data processes per unit of time. Obviously, naively increasing the minibatch size will increase the system efficiency of synchronous SGD. {\em Algorithmic Efficiency} is the inverse of the amount of training data that needs to be processed in order to achieve some desired model accuracy. Increasing the minibatch size decreases the algorithmic efficiency requiring more data or equivalently more iterations/epochs on a given training dataset to achieve the same level of accuracy. In the worst case, training might not converge when minibatch size is increased beyond a certain threshold. 

As one essentially cares about the training time to desired accuracy, the net accuracy of distributed training is a combination of system efficiency and algorithmic efficiency.  

\subsection{Learning Rate Optimizers}
While one can naively use the resulting gradient to update the model, researchers have proposed various learning-rate {\em optimizers} that adaptively use different learning rates for different parts of the model. For instance, Adam~\cite{adam} computes individual adaptive learning rates for different parameters based on the estimates of first and second moments of gradients. LARS~\cite{lars} uses layer-wise adaptive learning rates for greater training stability. The recent LAMB optimizer~\cite{lamb} extends LARS to effectively train models like BERT with large minibatch sizes. The benefits of \adasum{} are orthogonal to and can be used in concert with these optimizers.

\section{\adasum Algorithm}\label{sec:algo}
We present the intuitions behind the \adasum algorithm, while a mathematical treatment with convergence proof is in the Appendix. 

Consider two nodes that respectively compute gradients $g_1$ and $g_2$ respectively on minibatches $b_1$ and $b_2$. When using synchronous SGD, the effective gradient is the average $(g_1 + g_2)/2$. But as it is common to increase the learning rate proportional to the increased effective batch size, the combination amounts to a sum in practice. The main proposal behind this paper is to use an {\em adaptive sum} of the two gradients called the \adasum. 

$$ \adasum(g_1, g_2) = (1 - \frac{g_1^T \cdot g_2}{2\cdot \|g_1\|^2}) g_1 + (1 - \frac{g_1^T \cdot g_2}{2\cdot \|g_2\|^2})g_2 $$

Despite its apparent complexity, \adasum simply adds the two gradients after scaling with appropriate scalars. Using this operation instead of sum or average has the following two properties. First, the \adasum operation approximates the sequential execution of the two nodes running one after the other, thereby achieving convergence properties of smaller minibatch sizes. Additionally, it samples {\em both} possible orders of visiting minibatches --- $b_1, b_2$ and $b_2, b_1$. 
We then show how to extend the \adasum operation to more than two minibatches.  

\subsection{Emulating Sequential SGD}
Consider two steps of SGD starting from model $w_0$. Say, the first step computes a gradient $g_1(w_0)$ of the loss function at $w_0$ for $b_1$. 
SGD updates the model to 
$$w_1 = w_0 - \alpha \cdot g_1(w_0)$$ 
The second step computes its gradient $g_2(w_1)$ at $w_1$ for $b_2$. Assuming we are using the same learning rate for both steps, the final model after the second step is 
\begin{equation}
  \begin{split}
  w_{1,2} &= w_1 - \alpha \cdot g_2(w_0) \\
  & = w_0 - \alpha \cdot (g_1(w_0) + g_2(w_1))
  \end{split}
  \label{eq:seq}
\end{equation}
Here the subscript for $w$ indicates that SGD processed $b_1$ before $b_2$. 
Comparing this what a synchronous SGD algorithm would have computed (with a corresponding doubling of the learning rate) 
$$w_0 - \alpha \cdot (g_1(w_0) + g_2(w_0))$$ 
we see a difference because gradient $g_2$ is computed at $w_0$ instead of $w_1$. As previously observed~\cite{msra, ouripdps}, one can use second order reasoning to remove this staleness. Neglecting higher order terms in the Taylor expansion of $g_2(w_1)$, we have
\begin{equation}
  g_2(w_1) = g_2 - \alpha \cdot H_2 \cdot g_1
  \label{eq:seqhess}
\end{equation}
where $H_2$ is the Hessian matrix of the loss function at $w_0$. 

\subsection{Eliminating Hyperparameters}
One nice property of the standard synchronous SGD is that it adds no hyperparameters during gradient combination - we simply sum or average the two gradients. It is desirable to have the same property for \adasum.  
First, using a standard theorem~\cite{ggt, ggtblog} for estimating the Hessian matrix for negative log likelihood loss functions~(details in Appendix~\ref{app:taylor}), we can estimate
\begin{equation}
  g_2(w_1) = g_2 - \alpha \cdot g_2 \cdot g_2^T \cdot g_1
\end{equation}
where for simplicity we have dropped the model term from $g_1$ and $g_2$ when computed at $w_0$ for terseness.  
This along with the assumption that $\alpha$ is chosen optimally~(Appendix~\ref{app:optimallr}), we have
\begin{equation}
g_2(w_1) = g_2 - \frac{g_2^T \cdot g_1}{\norm{g_2}^2} g_2
\end{equation}
Essentially, by scaling $g_2$ with an appropriate scalar, we can emulate the sequential execution of two SGD steps in Equation~\ref{eq:seq} as
\begin{equation} \label{eq:asyncadasum}
  w_{1,2} = w_0 - \alpha \cdot [g_1 + (1 - \frac{g_2^T \cdot g_1}{\norm{g_2}^2}) \cdot g_2]
\end{equation}

\subsection{Sampling Multiple Paths}
The ability to emulate a sequential execution shown above provides an intriguing possibility. SGD is a stochastic process that samples a {\em path} defined 
by the order of the training data it processes. Now, the emulation above provides us a way to sample multiple paths for the cost of one! For instance, if SGD had processed
minibatch $b_2$ before $b_1$, the final model would be
 $$ w_{2,1} = w_0 - \alpha \cdot [g_2 + (1 - \frac{g_2^T \cdot g_1}{\norm{g_1}^2}) \cdot g_1]$$
Averaging the two samples, the final model would be
\begin{equation*}
  \begin{split}
w_{1,2+2,1} & = w_0 - \alpha \cdot [(1 - \frac{g_2^T \cdot g_1}{2 \cdot \norm{g_1}^2}) \cdot g_1 + (1 - \frac{g_2^T \cdot g_1}{2 \cdot \norm{g_2}^2}) \cdot g_2]\\
            & = w_0 - \alpha \cdot \adasum(g_1, g_2)\\
  \end{split}
\end{equation*}
This equation motivates our design of the \adasum operator. 

\subsection{Combining More Than Two Gradients}
\label{sec:binarytree}
We can extend \adasum to more than two gradients by recursively applying the operator as follows. Let $g_{[0,n]}$ be the result of applying \adasum to minibatches $b_0 \ldots b_{n}$. We can consider this as the {\em effective} gradient of these minibatches when emulating the behavior of SGD on $b_{n+1}$. Using the same arguments as above, we have 
$$ \adasum(g_{[0,n+1]}) = \adasum(\adasum(g_{[0,n]}), g_{n+1})$$
Since we double the number of paths of SGD emulated at each step, we achieve the effect of emulating {\em exponentially} many SGD paths. 

As discussed in Section~\ref{sec:impl}, one can reuse the standard ring algorithm used to sum all the gradients in synchronous SGD to implement the \adasum operation. 
One complexity is that the operation cannot be performed in a streaming manner as we need to compute the dot product and norm of the two gradients. For a more bandwidth optimal implementation, we use the following recursive application in practice.   

$$ \adasum(g_{[0,n]}) = \adasum(\adasum(g_{[0,n/2)}), \adasum(g_{[n/2,n]}))$$

\subsection{\adasum Properties}
Consider $\adasum(g_1, g_2)$ when $g_1$ and $g_2$ are orthogonal. Their dot product $g_2^T \cdot g_1$ is zero. Therefore, \adasum simply adds the two gradients. 
Now consider the case when they are parallel. Their dot product is simply the product of their norms. So, \adasum becomes the average of the two gradients. 
Intuitively, when the two gradients are pointing in orthogonal directions, \adasum behaves as if their loss functions are locally independent and aggressively sums the two gradients. Doing so when the two gradients are parallel has the danger of "overshooting" the minimum, particularly when the learning rate is also aggressive and therefore,
 \adasum safely averages the gradients.
This adaptiveness becomes important as we later show that gradients from different batches tend to point in the same direction during the initial parts of the training. This is because the initial model is completely random
and all gradients agree on the general direction model should progress. However, the gradients
 become progressively orthogonal in later parts of the training. \adasum{} automatically and adaptively interpolates between an aggressive sum and a safe average as training proceeds.

\subsection{Per-Layer Orthogonality}
For further insights, Figure~\ref{fig:shadow} shows the per-layer orthogonality of gradients during training. At different points during the training of \resnet and \bertlarge with $64$ GPUs, we compared the norm of the result of \adasum on the gradients from individual layers across all $64$ nodes with the individual norm of the gradients. Orthogonality of a set of gradients $g_1 \ldots g_n$ for a given layer is defined as $\frac{\norm{\adasum(g_{[1,n]})}^2}{\sum_i \norm{g_i}^2}$. This compares the norm of the result of \adasum on a set of gradients with the sum of their individual norms. Because of the properties of \adasum described above, orthogonality is $1$ when the gradients are orthogonal to each other and reaches the minimum value of $1/64$ when the gradients are parallel to each other and are of the same norm. Of other ways of measuring orthogonality, this was the easiest for us to collect experimentally. 

\begin{figure}[tb]
  \begin{subfigure}[b]{\columnwidth}
    \includegraphics[width=\columnwidth]{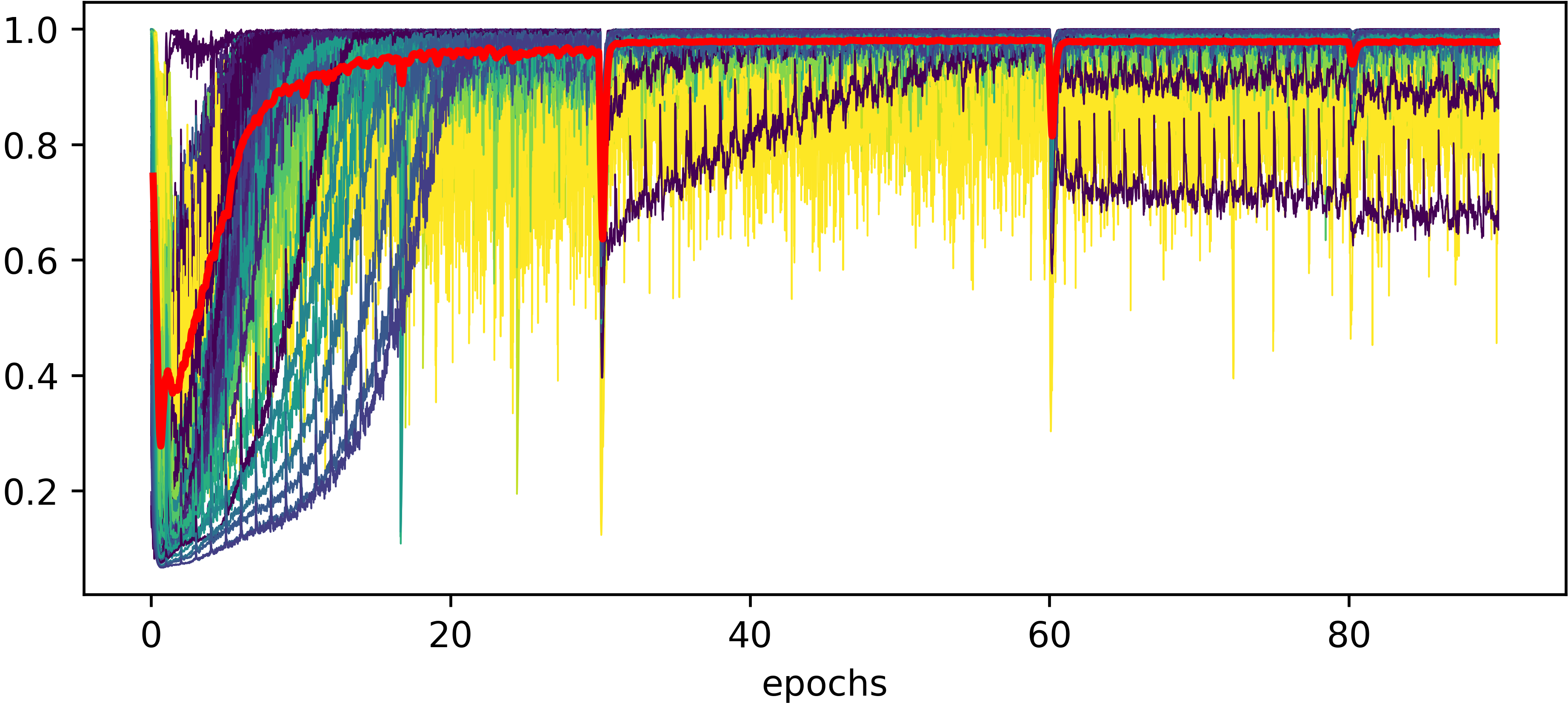}
    \caption{\resnet}
    \label{subfig:resnetlayer}
  \end{subfigure}

  \begin{subfigure}[b]{\columnwidth}
    \includegraphics[width=\columnwidth]{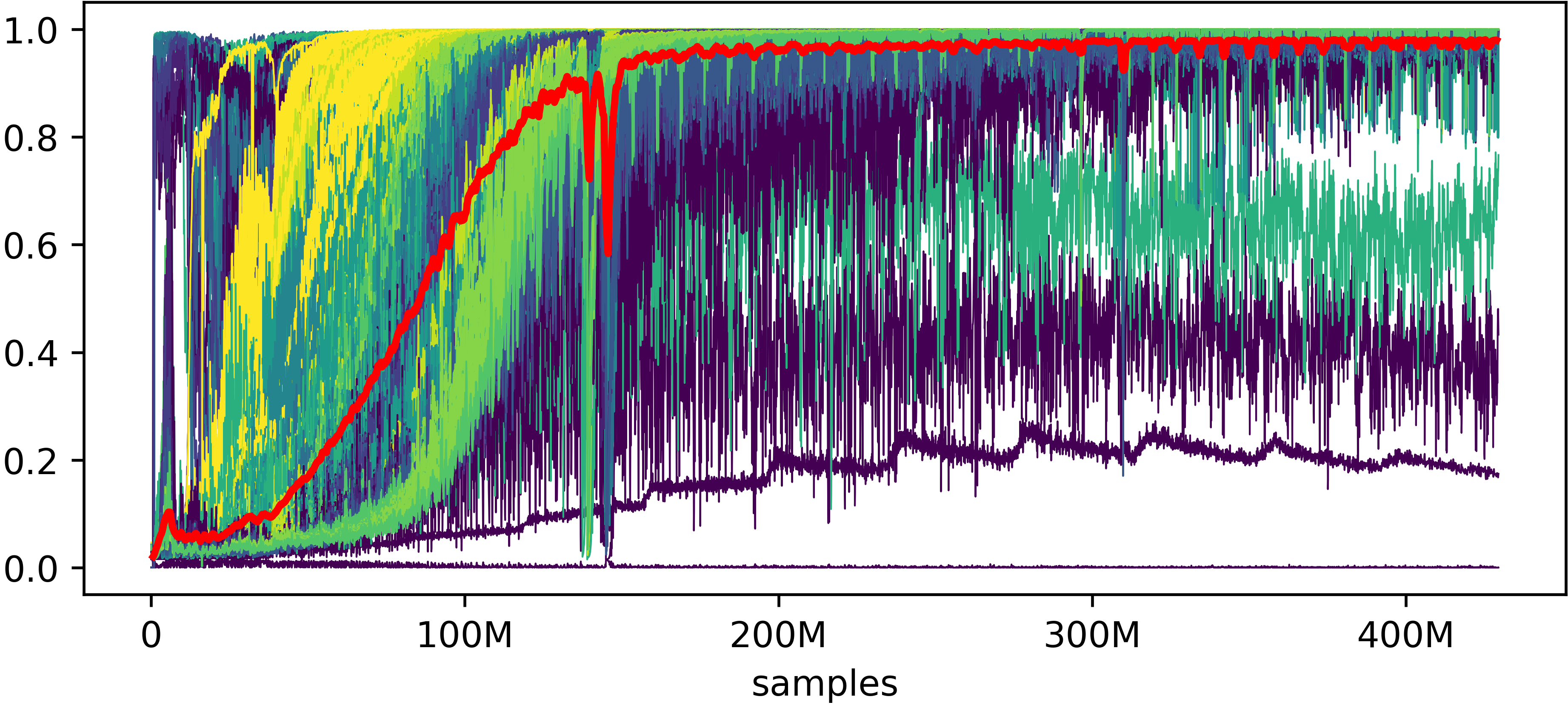}
    \caption{\bertlarge}
    \label{subfig:bertlayer}
  \end{subfigure}
    \caption{Orthogonality of gradients during \resnet~(a) and \bertlarge~(b). x-axis represents number of samples or epochs processed during training. A value of $1$ in the y-axis means that the gradients are orthogonal with lower values meaning less orthogonality. The bold red lines show orthogonality averaged across all layers, while the others show the orthogonality of individual layers with different colors.}
  \label{fig:shadow}
\end{figure}

The figures~\ref{subfig:resnetlayer} and \ref{subfig:bertlayer} show the orthogonality of different
layers and their average, shown by the bold red lines, for both \resnet and \bertlarge. We can clearly see from the
average of the orthogonality (red lines) that the gradients start out pointing in the same direction but very soon become 
orthogonal as the training proceeds. Each layer also demonstrate a similar pattern with a different color shown
in Figure~\ref{fig:shadow}. Of course, there are too many layers to distinguish each color individually. However, general trends are still visible. While most layers tend to become orthogonal as training proceeds, they do not do so at the same rate. This discrepancy is more visible for \bertlarge, where some layers have low orthogonality throughout the training process. Note that, there are clear drops in the orthogonality during the training for both benchmarks. These drops
happens exactly at boundaries of learning rate schedule change. 

To exploit this observation, we perform the \adasum operation {\em per layer} as opposed to applying to the whole gradient. This allows us to adaptively adjust the combination based on per-layer orthogonality.




\subsection{Evaluating Sequential Emulation}
\begin{figure}
\centering
  \includegraphics[width=\columnwidth]{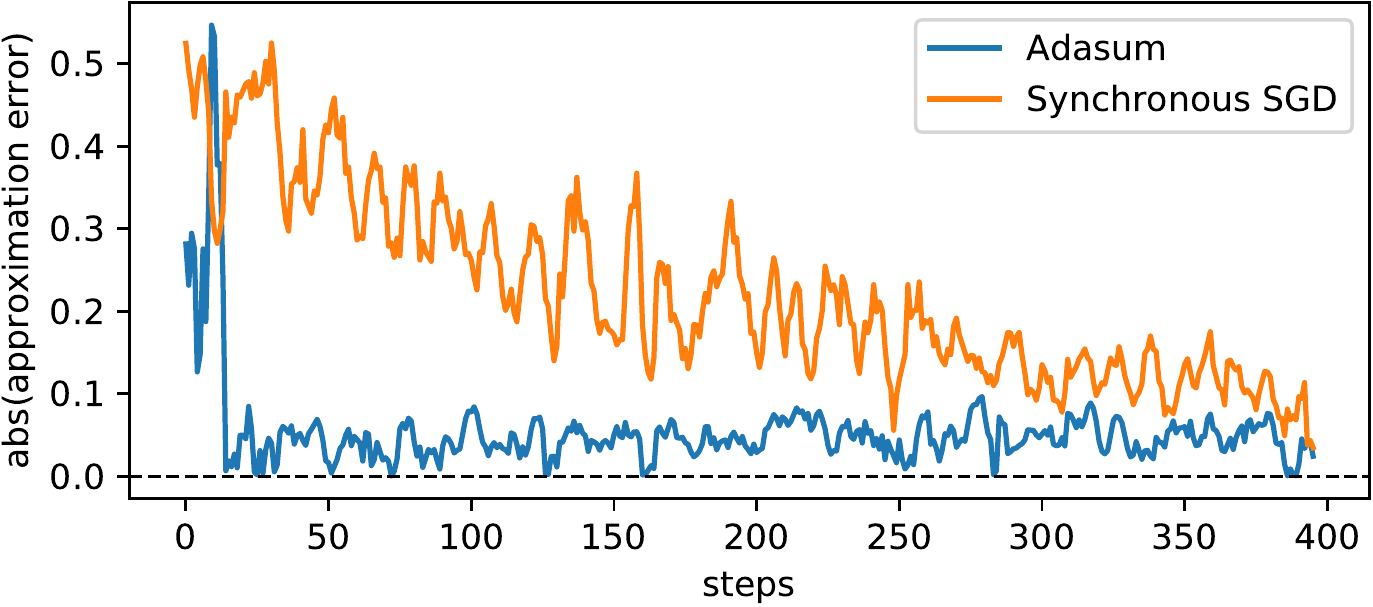}
  \caption{Relative error of \adasum and synchronous SGD when compared to a sequential emulation that uses the exact Hessian matrix. }
  \label{fig:error}
\end{figure}
To validate our intuitions behind \adasum, we empirically evaluate how closely \adasum emulates a sequential emulation.
As shown earlier in Equation~\ref{eq:seqhess}, we can reduce the staleness of gradients using the Hessian of the loss function. Luckily, for small models such as LeNet-5, 
we can compute the Hessian matrix exactly. 
Specifically, we downloaded the \torch MNIST tutorial example, modified the code to ensure deterministic runs, and used 
\torch's autograd facilities to compute the exact Hessian at each step during a parallel run with 64 nodes (that reached the target accuracy of 99.3\%). 
After every communication step, we computed the model with sequential emulation using the exact Hessian, using the \adasum operator, and using the baseline synchronous SGD. Figure~\ref{fig:error} shows the relative error of \adasum and synchronous SGD with respect to the sequential emulation using the Hessian. As we can see, \adasum has a  lower approximation error reaching close to zero at some steps. Note synchronous SGD error goes down with number of steps---
the reason is because the norm of gradients $\norm{g}$ decays as the model approach the optimal answer
and $H\approx g\cdot g^T$ decays quadratically. In a real run, the error shown here is accumulated during the training process. This further validates our intuition that \adasum should achieve faster convergence than synchronous SGD, which we demonstrate empirically in Section~\ref{sec:results}.

\section{Implementation}\label{sec:impl}
\adasum is implemented in Horovod~\cite{horovod} and is publicly available in the main branch of its open-source repository. 
Horovod integrates with multiple machine learning frameworks, such as \tf and \torch. 
Horovod has a \verb!C++! backend targeting multiple backend transports like Ethernet/IB or
NVLINK. \adasum works with CUDA aware MPI (when available) and we implemented \adasum for all of these backends. 

\subsection{Using \adasum in Horovod}
\adasum is easy to use by specifying an option to the {\tt DistributedOptimizer} API of Horovod as follows. 
\begin{lstlisting}[language=Python, aboveskip=0.5\baselineskip, belowskip=0\baselineskip]
opt = hvd.DistributedOptimizer(opt, op=hvd.Adasum)

\end{lstlisting}
When enabled, the distributed optimizer calls the necessary \adasum allreduce operations to
synchronize global model updates. As with Horovod, the user is responsible for partitioning data across nodes and initializing the model correctly in all nodes. 

Providing the option is the only change required for users wanting to use an existing {\tt DistributedOptimizer} such as Adam or LAMB.  
For more fine grained control, we also expose the \adasum operator through Horovod's {\tt allreduce} using the same option. 
\begin{lstlisting}[language=Python, aboveskip=0.5\baselineskip, belowskip=0\baselineskip]
  hvd.allreduce(opt, op=hvd.Adasum)
\end{lstlisting}
This is useful when users want to perform additional operations such as gradient clipping beyond those implemented in a {\tt DistributedOptimizer}. 

\begin{figure}
  \begin{lstlisting}[language=Python]
    params = list(model.parameters())

    # save a copy of the model parameters
    starts = [p.clone().detach() for p in params]

    # do forward / backward
    ...
    
    # optimizer step
    optimizer.step() # modifies params  

    # apply adasum after optimizer
    for start, current in zip(starts, params):
      effective_gradient = current - start
      hvd.allreduce(effective_gradient, op=hvd.Adasum)
      current.data.add_(effective_gradient)
  \end{lstlisting}
  \caption{Implementation of \adasum with optimizers such as Adam or LAMB.}
  \label{fig:adasumopt}
  \end{figure}


One subtlety in the implementation is that the \adasum operation should be performed {\em after} the optimizer update, as shown in Figure~\ref{fig:adasumopt}. 
In contrast, synchronous SGD performs allreduce before the optimizer update. Intuitively, this is because \adasum does not increase the minibatch size like synchronous SGD when distributing across multiple nodes. As such, the logic of optimizers should only apply to the smaller minibatches per node. 
This is similar to BMUF~\cite{chen2016scalable}. Users have to replicate this logic explicitly when not using an existing {\tt DistributedOptimizer} directly.

\subsection{\adasum{} Allreduce Implementation}
This section describes the implementation of the \adasum operator in Horovod's allreduce. 
Message Passing Interface (MPI) provides capabilities to perform user-defined reduction operations for allreduce. 
However, since these custom reductions can only be elementwise, \adasum cannot be implemented as a user-defined reduction.


\newcommand{\Allreduce}{\textsc{Allreduce}}
\newcommand{\RVHAdasum}{\textsc{AdasumRVH}\xspace}
\newcommand{\Send}{\textsc{Send}}
\newcommand{\Recv}{\textsc{Recv}}
\newcommand{\GroupComm}{\textsc{GroupComm}}
\newcommand{\assign}{=}
\newcommand{\rank}{\mathit{rank}}
\newcommand{\ranks}{\mathbb{N}^{<\size}}
\newcommand{\size}{\mathit{size}}
\newcommand{\dist}{\mathit{d}}
\newcommand{\nghr}{\mathit{nghr}}
\newcommand{\midvar}{\mathit{mid}}
\newcommand{\group}{\mathit{group}}

\subsubsection{Recursive Vector-Halving}
\begin{algorithm}[tb]
\caption{Recursive vector-halving with \adasum\label{alg:vhdd-adasum}}
\begin{algorithmic}[1]
    \Require $\size>2$ is a power-of-two.
    \Procedure{$\RVHAdasum$}{$x$,$\dist$}
        \State $\midvar\assign\lfloor|x|/2\rfloor$\label{alg:vhdd-adasum:line:start-exhange}
        \If{$\lfloor\rank/\dist\rfloor$ is even} \Comment{Left neighbor}
            \State $\nghr\assign\rank+\dist$
            \State $\Send(x_{\midvar:|x|},\nghr)$ \Comment{Send right half}
            \State $a\assign x_{0:\midvar}$
            \State $b\assign \Recv(\nghr)$ \Comment{Receive left half}
        \Else \Comment{Right neighbor}
            \State $\nghr\assign\rank-\dist$
            \State $\Send(x_{0:\midvar},\nghr)$ \Comment{Send left half}
            \State $a\assign \Recv(\nghr)$ \Comment{Receive right half}
            \State $b\assign x_{\midvar:|x|}$
        \EndIf\label{alg:vhdd-adasum:line:end-exhange}
        \State $\dist'\assign 2\cdot\dist$
        \State $v\assign[a\cdot b,a\cdot a,b\cdot b]$\label{alg:vhdd-adasum:line:partial} \Comment{Partial dot products}
        \State $\group\assign[\lfloor\frac{\rank}{\dist'}\rfloor\cdot \dist'+i\text{ for }i\assign 0\twodots \dist'-1]$\label{alg:vhdd-adasum:line:group}
        \State $v\assign\Allreduce{}(v,+,\group)$\label{alg:vhdd-adasum:line:allreduce} \Comment{Finish dot products}
        \State $x'\assign a\cdot(1-\frac{v_1}{2v_2}) + b\cdot(1-\frac{v_1}{2v_3})$\label{alg:vhdd-adasum:line:finish} \Comment{Apply \adasum}
        \If{$\dist'<\size$}
            \State $x'\assign \RVHAdasum(x',\dist')$\label{alg:vhdd-adasum:line:recurse}
        \EndIf
        \State $\Send(x',\nghr)$ \Comment{Send my half}\label{alg:vhdd-adasum:line:start-allgather}
        \State $y\assign\Recv(\nghr)$ \Comment{Receive neighbor's half}
        \State $x\assign x'\mdoubleplus y$ \textbf{if} $\lfloor\rank/\dist\rfloor$ is even \textbf{else} $y\mdoubleplus x'$\label{alg:vhdd-adasum:line:end-allgather}
    \EndProcedure
\end{algorithmic}
\end{algorithm}

Our implementation of \adasum uses a modified recursive vector-halving~(RVH) algorithm for
allreduce~\cite{Vandegeijn94, Chan07}, which is both latency and bandwidth
optimal in a hypercube and fully connected networks. On each step of the
reduce-scatter phase of the algorithm each node exchanges half of its data
with its neighbor and applies the reduction on its own half. 
In the baseline algorithm, since each application of the reduction operation is only given a part of the data, the operation must be elementwise to ensure the correct result.

To work around this problem we modify the RVH algorithm to
perform each \adasum{} operation in two phases, with an additional step of
communication in between. Algorithm~\ref{alg:vhdd-adasum} describes these
modifications. Each process is identified by a zero-based index called
rank. A process can send a vector $v$ to another rank $r$ with $\Send(v,r)$ and
receive one with $v\assign\Recv(r)$. The algorithm also uses another allreduce
as a primitive to sum partial dot products and squared norms across subgroups of
ranks: $\Allreduce(v^i,\mathit{op},\group)$ returns a pointwise reduction of
vectors $v_i$ from all ranks $i\in\group$, where $\group$ is a list of the ranks
participating in the reduction

Each level of recursion in Algorithm~\ref{alg:vhdd-adasum} starts with ranks
exchanging half of their vector $x$ with a neighbor at distance $\dist$
(lines~\ref{alg:vhdd-adasum:line:start-exhange}-\ref{alg:vhdd-adasum:line:end-exhange}).
Here the two halves $a$ and $b$ are assigned such the left neighbor's half is in
$a$ and the right neighbor's half is in $b$.

Lines~\ref{alg:vhdd-adasum:line:partial}-\ref{alg:vhdd-adasum:line:finish} of
Algorithm~\ref{alg:vhdd-adasum} represent the main modification to baseline RVH algorithm. 
First, on line~\ref{alg:vhdd-adasum:line:partial}, each rank
calculates a dot product and squared norms for $a$ and $b$, which are slices of
a larger logical vector shared across exactly the ranks in $\group$
(line~\ref{alg:vhdd-adasum:line:group}).
Line~\ref{alg:vhdd-adasum:line:allreduce} then sums the products among the ranks
in $\group$ to produce the complete results in $v$. The reduction is finally
applied locally using the values in $v$
(line~\ref{alg:vhdd-adasum:line:finish}).

The algorithm continues as normal, using recursion on
line~\ref{alg:vhdd-adasum:line:recurse} until all ranks share slices of the same
reduced vector, followed by an all-gather phase on
lines~\ref{alg:vhdd-adasum:line:start-allgather}-\ref{alg:vhdd-adasum:line:end-allgather}.

\subsubsection{Integration into Horovod} \label{sec:int:horovod}
Horovod uses \RVHAdasum{} to reduce tensors whenever \lstinline{hvd.allreduce}
or \lstinline{hvd.DistributedOptimizer} is used with \lstinline{op=hvd.Adasum}.
Additionally, if the {\tt HOROVOD\_HIERARCHICAL\_ALLREDUCE environment} variable is
set Horovod performs a \emph{hierarchical} allreduce using the NVIDIA Collective
Communications Library (NCCL). This variant starts and ends with a NCCL
reduce-scatter and allgather, respectively, for communication among the GPUs
inside a node, with cross-node reduction handled by \RVHAdasum{}. This is useful
with some hardware configurations on which NCCL offers higher throughput than
CUDA-aware MPI. 




\subsubsection{\RVHAdasum Performance}
\label{sec:performance}

Now we evaluate the latency
of $\RVHAdasum$ on 16 Azure nodes with 4 V100s per node (PCIe
interconnect) and a single 100 Gb/s Infiniband connection between
them.  As a baseline, we compare to NCCL's sum operation.
\begin{figure}
  \includegraphics[width=\columnwidth]{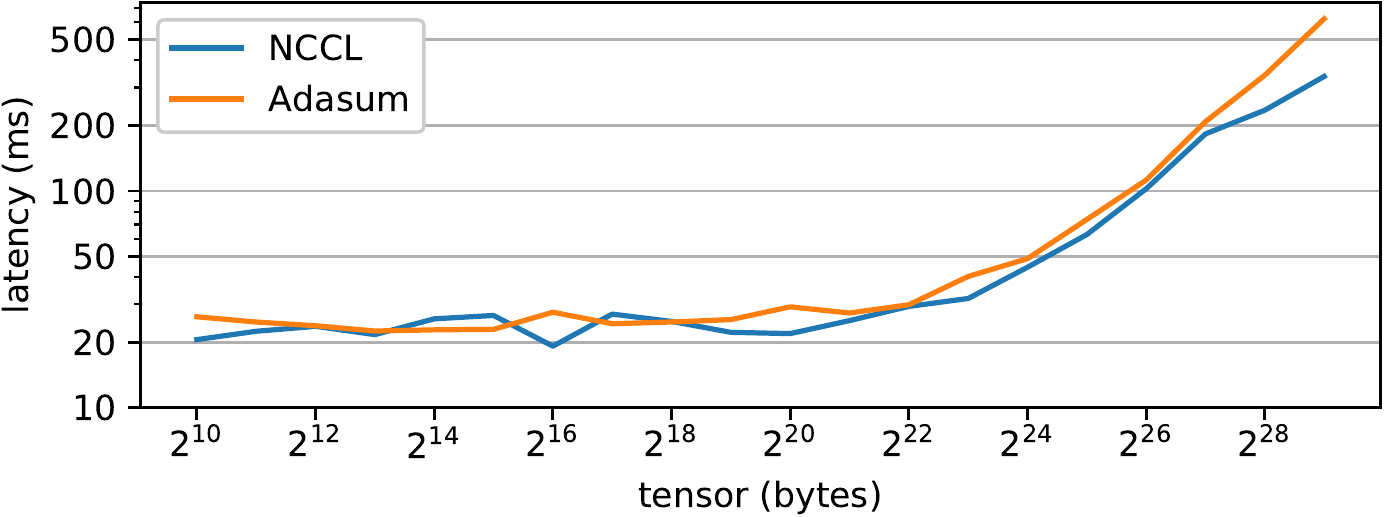}
  \caption{Latency of $\RVHAdasum$ vs. NCCL for various message sizes.\label{fig:bandwidth}}
\end{figure}
A point on Figure~\ref{fig:bandwidth} ($x$,$y$) shows the latency for
an allreduce operation ($y$) in seconds as a function of the number of
bytes reduced ($x$).  For each point on the $x$ axis, we allocate 64
tensors on each GPU's memory so their sum is the the number of bytes.  The figure
demonstrates that despite the additional logic required to perform an
adaptive summation, the performance of $RVHAdasum$ is roughly equal to
the highly optimized NCCL library simply doing a summation. Note
that these results use vectorization as well as tensor fusion
with a threshold of 2MB as discussed in Section~\ref{sec:vect} and \ref{sec:fusion}.

As section~\ref{sec:binarytree} described, there are two ways of applying the pairwise \adasum operation on a set of gradients.
\RVHAdasum performs the "tree" reduction. An alternate way is to apply the \adasum operator linearly. 
We additionally implemented this approach and optimized it using techniques similar to the ring allreduce algorithm commonly used for synchronous SGD. 
We found that this "ring" implementation provided less throughput than the baseline NCCL allreduce and \RVHAdasum on the architectures we evaluated. 
Nevertheless, we believe the ring allreduce version of \adasum could be competitive for other architectures.

\subsection{Parallelizing \adasum Computation}
\label{sec:part}

For large models such as \bertlarge, memory available in a GPU only fits a small microbatch size. 
In such cases, to increase the effective microbatch size, we use the GPUs available in a single node to accumulate local gradients and use the \adasum operation across nodes. In these scenarios, we parallelize the \adasum computation across all these local GPUs.  

Our approach is inspired by the optimizer-state partitioning algorithm pioneered by Marian~\cite{marian}, a deep learning toolkit optimized for NLP
workloads. Optimizers like Adam or LAMB maintain additional state per gradient element to estimate moving mean and/or variance. 
The optimizer parameters are identical for all GPUs and thus it is not necessary to replicate them. Marian partitions this state and parallelizes the updates to the state. 
Note that this memory optimization is not related to model parallelism as model parameters are still replicated on all GPUs.

We use the same insight to parallelize the \adasum computation. Looking at Figure~\ref{fig:adasumopt}, we partition the optimizer state as in Marian. In addition,
we also partition the {\tt effective\_gradient} across the local GPUs. A key difference between the Marian approach is that rather than distributing this state uniformly, 
we partition to ensure that state corresponding to one neural network layer falls in the same partition. This greatly simplifies our implementation as we do not have 
to modify the code of the underlying optimizer. 

After the optimizer update, which is done in parallel (as in Marian), the {\tt effective\_gradient} in Figure~\ref{fig:adasumopt} is already partitioned across local GPUs. 
Now, each GPU does an \adasum allreduce only on the layers in its partition by communicating with the corresponding GPU that share partitions in other nodes. 
Finally, the GPU broadcasts its partition of {\tt effective\_gradient} locally to all other GPUs in the same node to update the model parameters.   
To optimize the cost of this local broadcast, we overlap this communication with the \adasum operation of the next layer. 

\begin{table}[h]
  \small
  \begin{center}
  \begin{tabular}{@{}llll@{}}

  & Without & With \\
  \midrule
  Throughput (samples/s) & 154.7 & 168.5 \\
  Model update (s) & 1.82 & 0.97 \\
  Microbatch & 22 & 36 \\
  \end{tabular}
\end{center}
  \caption{Performance improvement with and without \adasum parallelization.}
  \label{table:part}
  \end{table}

The table above shows the performance improvement of this optimization for a \torch implementation of \bertlarge on a Azure VM with 4
  V100s (16 GB RAM) connected by PCIe for 128 max sequence length. 
Since this optimization reduces the memory usage, we can increase the microbatch size by 60\% as shown in the last column. 
 To measure the impact of this larger microbatch,
  we evaluate 256 microbatches before an adasum operation 
  The first column in
  Table~\ref{table:part} provides the throughput per GPU. In other
  words, the 60\% larger microbatch 
  yields nearly a 10\% improvement in per GPU throughput.
  To measure the impact of parallelizing the \adasum operation, we evaluate a
  microbatch size of 1 per model update.  The third column in
  Table~\ref{table:part} shows that this time drops 
  nearly 1.87$\times$.


\subsection{Implementation Details}
This section describes implementation details that are crucial for improving system and algorithmic efficiency of \adasum. 

\subsubsection{Low Precision Support}
Recent trends in ML training have shown the promise of using low-precision formats such as 
\texttt{fp16} for compute and communication efficiency. 
Our implementation of \adasum integrates with the low-precision support in Horovod to obtain these benefits automatically. 

We discuss two important subtleties in our implementation of low-precision support. 
First, \adasum requires computing the dot product and norms of the combined gradients. 
The accumulation of the values for these two operations happens in a {\tt double} even if the gradients are in lower precision. 
This does not incur any measurable overhead for both CPU and GPU implementations. On the other hand, the improved floating point stability is crucial for the improved convergence of \adasum.  

When using lower-precision, it is common to use {\em dynamic scaling}~\cite{mixedprecision}.
The basic idea is to maintain a {\em scale} for all tensors to ensure that the values are always in the dynamic range of the low-precision format. 
During the training, these scales have to be periodically adjusted when the values exceed the range (resulting in {\tt nan}s). We perform dynamic scaling for tensors we introduce, such as the {\tt effective\_gradient} in Figure~\ref{fig:adasumopt}.

\subsubsection{CPU and GPU Vectorization}
\label{sec:vect}
\adasum{} runs on both CPU and GPU hardware in \texttt{fp16},
\texttt{fp32}, and \texttt{fp64}.  For CPU hardware, we manually
vectorize loop bodies that perform both dot products and summations.
When Horovod is compiled with CUDA aware MPI, we implement these
same loops as GPU kernel calls that operate directly on GPU
memory and thus save on the transfer from GPU to CPU.  This is
particularly important on hardware that supports GPUDirect RDMA as GPU
memory need not be copied to the CPU for the \adasum{} operator.

\subsubsection{Tensor Fusion}
\label{sec:fusion}
If at the time of allreducing a tensor, tensors from other layers are
also ready and available on all hosts, Horovod fuses these tensors
into a single one, performs an allreduce on the fused tensor, and then
copies from the fused tensor back to the individual tensors.  This
optimization significantly reduces latency for small tensors as the
overhead of potentially many individual allreduce calls are amortized
into a single one.  To enable this optimization with \adasum, we do
additional bookkeeping to keep track of tensor boundaries in the fused
tensor as \adasum{} requires these boundaries to compute dot products
per layer.  Because all hosts 1) fuse the same set of tensors and 2)
have the same layer sizes, this bookkeeping is stored locally and does
not increase communication overheads. Note that Horovod uses an
extra buffer for copying the tensors from different layers
into a consecutive array so that underlying libraries such as MPI
or NCCL can be called once. \adasum uses the same buffer for this
optimization. The size of this buffer is controlled by {\tt
HOROVOD\_FUSION\_THRESHOLD}. A default value between 2MB-64MB
usually works well.
\section{Results}\label{sec:results}
To show that \adasum works across a variety of real-world training scenarios, 
we evaluate its performance through a sequence of case studies.
Throughout the experiments, we use our implementation of \adasum in Horovod described in Section~\ref{sec:impl} on models implemented in both \torch and \tf. 

We first study the algorithmic and system efficiency of \adasum for \resnet and \bertlarge. 
Then, we use LeNET5, which is small enough to do extensive hyperparameter tuning, to
show that \adasum{} enables robust scaling without the need for additional hyperparameter tuning. 
Finally, we finish with a short summary of our experience with production models. 



\subsection{\resnet on Fast Interconnects}
This section evaluates \adasum on 
\torch's \resnet using
the Momentum-SGD optimizer
on hardware with a fast interconnect.

\subsubsection{Platform and Methods}
\resnet~\cite{resnet50} on Imagenet\cite{ILSVRC15} is a popular model
for studying performance of training algorithms and
implementations. We used \torch's \resnet model modified to run with
Horovod and compared the performance of \adasum with Horovod's default
$Sum$ operator as the baseline for synchronous SGD. We ran experiments
on Azure's \texttt{Standard\_NC24rs\_v3} virtual machines, each of
which has 4 NVIDIA Tesla V100 GPUs connected with PCIe, dual-socket
Intel Xeon E5-2690 v4 CPUs, 448 GiB of memory, and connected via
Infiniband. Hierarchical allreduce, as described in
Section~\ref{sec:int:horovod} was faster for both the baseline and
\adasum{} runs so we used it here.



We train on 64 V100s with 2K and 16K examples per allreduce and use
the default hyper-parameters that ship with the benchmark for its
momentum based SGD optimizer.
%

\subsubsection{Algorithmic Efficiency}

The number of epochs required for each configuration to reach the target
accuracy are as follows:
\begin{center}
  \small
\begin{tabular}{@{}llll@{}}
  \Horovodsum 2k & \Horovodsum 16k & \adasum{} 2k & \adasum{} 16k \\
  \midrule
  62 & - & 62 & 69 \\
\end{tabular}
\end{center}
Because \horovodsum with 16k batch size \emph{never} reaches 74.9\% validation accuracy
(we let it run for 120 epochs), it's algorithmic efficiency is zero. \adasum{}
on the other hand sees only a 11\% decline in its algorithmic efficiency as we increased the batch size. 
This is more than made up for in increased system efficiency.

\subsubsection{System Efficiency}
The times per epoch for each configuration are as follows:
\begin{center}
  \small
\begin{tabular}{@{}llll@{}}
  \Horovodsum 2k & \Horovodsum 16k & \adasum{} 2k & \adasum{} 16k \\
  \midrule
  5.61 min & 2.12 min & 5.72 min & 2.23 min \\
\end{tabular}
\end{center}
\adasum{} closely matches the system efficiency of Horovod's \horovodsum
implementation at both 2k and 16k batch size. Increasing the batch size from 2k
to 16k results in a 61\% and 62\% improvement for \adasum{} and \horovodsum,
respectively.

\begin{figure}
  \includegraphics[width=\columnwidth]{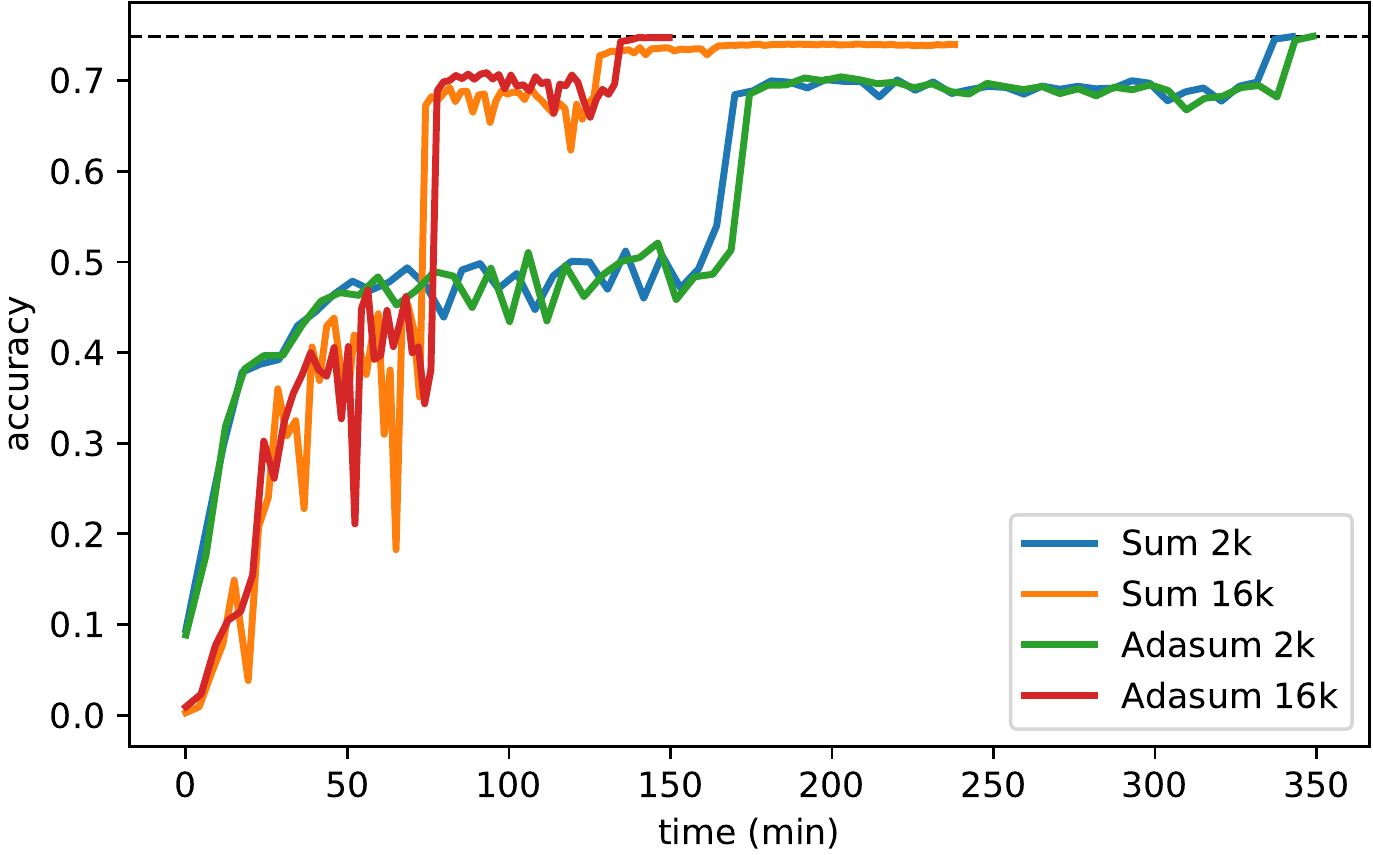}
  \caption{Time-to-accuracy chart for \resnet with 64 GPUs on 16 Standard\_NC24rs\_v3 VMs.}
  \label{fig:resnet50:tta}
\end{figure}
The total efficiency can be seen in Figure~\ref{fig:resnet50:tta}, which shows
the time ($x$-axis) to accuracy ($y$-axis) for each configuration. \Horovodsum
16k plateaus below the target accuracy. \adasum 16k, on the other
hand, gets to a top-1 accuracy of 74.9\%, being 2.3X
faster in time to accuracy than \adasum 2k, \emph{while using the same number of
GPUs}.

\subsection{\tf \resnet on Slow TCP}
Horovod and \adasum run on all types of hardware.  Often, networks
have TCP rather than IB and as such, communication limits scaling to
multiple nodes. This section demonstrates \adasum{} enables a larger
effective batch size, which reduces communication overhead, and yet
still converges fast enough to reduce time to accuracy.

\subsubsection{Platform and Methods}
This section demonstrates how a \tf implementation of \resnet from
MLPerf's v0.5 reference implementation~\cite{mlperf} is able to scale to 16 V100s
(4 GPUs 32GB cards per node with a PCIe gen 3 interconnect) and TCP
(40 GB/s) interconnects in between.

We downloaded the MLPerf v0.5 reference implementation and slightly
modified it to use Horovod with \adasum.  We did a small
hyper-parameter search over the learning rate but did not change any
other hyper-parameters.  We found $4\times$ the default learning rate
provided the fastest convergence.  This benchmark uses the Momentum
SGD optimizer from \tf. MLPerf v0.5 converges when the test accuracy
is greater than or equal to 74.9\%.  

Unlike simply adding the gradients together for gradient accumulation,
the \tf \adasum enabled distributed optimizer uses a local SGD step to
update weights; when it is time for an allreduce, the gradient is
estimated via a delta from the model's state since the prior
allreduce.
\begin{table}[t]
  \small
  \center{
    \begin{tabular}{@{}lrr@{}}
      Local steps before communicating & 16    & 1     \\ 
      \midrule
      Effective batch size             & 64K   & 4K    \\ 
      Minutes per epoch                    & 1.98  & 2.58  \\ 
      Epochs till convergence          & 84    & 68    \\
      \hline
      Time to accuracy(min)            & 166.32& 175.44\\
      \end{tabular}
    }
     \caption{Algorithmic and System efficiency for \tf \resnet on TCP.
      \adasum{} enables faster time to accuracy by doing more compute
      per communication.}
    \label{table:tfresnet}
\end{table}

\subsubsection{Algorithmic Efficiency}
Table~\ref{table:tfresnet} shows the results of this
experiment. \adasum for \tf enables a form of gradient accumulation
where the toolkit makes many local steps before communicating.  The
first row denotes how many local steps to make before initiating an
allreduce with the \adasum operation.  Note that when it takes 1 local
step before communicating, there is no gradient accumulation. In
contrast, 16 local steps before communicating means that the allreduce
is called once every 16 local steps.  With 16 GPUs and 256 examples
per GPU, the convergence is fast in 68 epochs (fourth row in
Table~\ref{table:tfresnet}).  That convergence slows to 84 epochs with
an effective batch size of 64K.  It is important to note that none of
the submissions to MLPerf v0.5 used more than a 16K effective batch
size (we verified by looking at the result submissions).  Thus
\adasum is able to scale the \tf momentum
optimizer to 64K examples before communicating $4\times$ more than prior
art. 

\subsubsection{System Efficiency}
When communicating after every step, the system efficiency is low as
communication dominates (third row in Table~\ref{table:tfresnet}).  In
contrast, when we communicate every 16 local steps, the time for 1
epoch is much faster. Total running time is given by the last row:
min per epoch * epochs till convergence and clearly communicating
less frequently has a big impact in overall running time despite the
slight increase in algorithmic efficiency. Thus, a developer can
exploit fast distributed hardware even without a fast interconnect.

\subsection{\bertlarge}
This section shows \adasum{} scales both Adam~\cite{kingma2014adam}
and LAMB~\cite{lamb} for the \torch NVIDIA implementation of
\bertlarge~\cite{nvidiabert}.

\subsubsection{Platform and Methods}
Training \bertlarge~\cite{bertorg} a natural language processing (NLP)
model, takes place in two stages.  First is an unsupervised
\emph{pre-training} on two large text corpuses, Wikipedia and
BookCorpus. Then, the model is fine-tuned for a ``downstream'' NLP
task such as \squad question-answering~\cite{squad1,squad2}. A target
F1 score of \squad 1.1 of 90.5 averaged over 5 tries with different
seeds is generally accepted for \bertlarge pre-training~\cite{lamb,bertorg}.

Pre-Training \bertlarge requires tokenizing input sentences into a max
sequence length. A maximum sequence length of 512 is computationally
expensive and thus Devlin \etal suggest \cite{bertorg} breaking
pre-training into two phases: phase 1 with a maximum sequence length of
128 for 90\% of the training iterations and phase 2 with a maximum
sequence length of 512 for the remaining 10\%.

NVIDIA's repo contains scripts that 1) download and pre-process data,
2) phase 1 and 2 pre-training and, 3) \squad fine-tuning and
evaluation. NVIDIA's pre-training scripts use mixed precision training
in addition to data parallelism with NCCL~\cite{nccl}.  This codebase
is our baseline and for the \adasum{} implementation, we replaced its
use of \lstinline[columns=fixed,language=Python]{torch.distributed}
with the \adasum operator in Horovod.

The system that we used for this case study is a cluster of DGX-2 nodes where each node has 16 V100 GPUs with 
32GB of memory per GPU and NVSwitch intra connection. Each node has 8 NICs with Infiniband support capable of 
delivering a throughput of 800GB/s per node.

\subsubsection{Algorithmic Efficiency}
\begin{table}[t]
  \small
  \center{
  \begin{tabular}{@{}lrr@{}}
    & \multicolumn{2}{@{}l@{}}{Number of iterations}\\
    Algorithm & Phase 1 & Phase 2  \\
    \midrule
    Baseline-Adam & - & - \\
    Baseline-LAMB~\cite{lamb} & 7039 & 1563 \\
    \adasum-Adam & 7039 & 1563 \\
    \adasum-LAMB - 20\% & 5639 & 1250 \\
    \adasum-LAMB - 30\% & 5039 & 1563 \\
    \adasum-LAMB - 128K & 4574 & 1563 \\
  \end{tabular}
  }
  \caption{Algorithmic efficiency results on \bertlarge. Table shows the number of iterations required for Phase 1 and Phase 2 to achieve target \squad score of 90.5, when using the effective batch size of 64K for Phase 1 and 32K for Phase 2.}
  \label{bertalgo}
\end{table}
Table~\ref{bertalgo} describes the algorithmic efficiency of \adasum over the Adam and LAMB optimizer when using an effective batch size of 64K for Phase 1 and 32K for Phase 2. As reported in prior work, the Adam optimizer does not scale to batch sizes beyond 16K. This motivated the study of more sophisticated optimizers such as LARS and LAMB. For instance, our runs of the LAMB optimizer achieve the target \squad score with $7039$ iterations of phase 1 and $1563$ iterations of phase 2, as shown in second row of Table~\ref{bertalgo}. 

The next two rows of Table~\ref{bertalgo} show the performance of \adasum. In contrast to Adam baseline, 
the \adasum-Adam optimizer converges with 64K when run with the same number iterations for Phase 1 and Phase 2 as the LAMB baseline. 
This is an interesting result as despite the advances of optimizers such as LAMB, Adam optimizer continues to be popular for some models. When compared to prior work~\cite{lamb}, it is important to note that \adasum adds no additional hyperparameters simply reusing the baseline 
parameters of the Adam optimizer. On the other hand, improvements provided by \adasum are orthogonal to improvements in optimizers. As shown in Table~\ref{bertalgo}, \adasum-LAMB provides close to 20\% faster convergence compared to LAMB baseline 
requiring 5639 iterations for phase 1 and 1250 iterations phase 2. 

We also performed two variations of our \adasum-LAMB results. First, we aggressively reduce the number of Phase 1 iterations by 30\%. With an equivalent aggressive reduction on Phase 2, we slightly missed the target \squad score by 0.5. However, we did achieve the target accuracy with the full 1563 iterations in Phase 2, which is what we report in the table. With a more fine grained search for Phase 2 iterations, we believe we can achieve the target \squad score with fewer Phase 2 iterations.
These results are still interesting as Phase 1 takes a larger percentage of training time than Phase 2. 

For the second variation, we increased the effective batch size of Phase 1 to 128K. We were able to achieve the target \squad score with 4574 Phase 1 iterations, while using the standard 1563 iterations of Phase 2 with 32K batch size. To the best of our knowledge, this is the largest report effective batch size for \bertlarge.   

\subsubsection{System Efficiency}
\begin{table}[t]
  \small
  \center{
  \begin{tabular}{@{}lllllll@{}}
    & \multicolumn{2}{l}{PH1 speedup} & \multicolumn{2}{l}{PH2 speedup} & \multicolumn{2}{l}{Time (minutes)} \\
    GPUs & Sum & Adasum &  Sum & Adasum &  Sum & Adasum \\ 
    \midrule
    64  & 1     &  0.98 & 1      & 0.99 & 997 & 809 \\
    256 & 3.79 &  3.61 & 3.89  & 3.92 & 260  & 214 \\
    512 & 7.47 &  6.48 & 7.24  & 7.28 & 135  & 118
  \end{tabular}
  }
  \caption{System efficiency on \bertlarge for an effective batch size of 64K and 32K for phase 1 and phase 2, 
  respectively. The speedup numbers are relative to the throughput of Baseline-LAMB with 64 GPUs, which is 
  12.2K examples per second for Phase 1 and 4.6K examples per second for Phase 2. The improved convergence time 
  of \adasum  is a result of the 20\% improvement in algorithmic efficiency as shown in Table~\ref{bertalgo}.}
  \label{bertsys}
\end{table}
The algorithmic efficiency of \adasum is only useful if the algorithm can be
implemented without substantial degradation in system efficiency.
Table~\ref{bertsys} shows the speedup of \adasum-LAMB when compared to
Baseline-LAMB for an effective batch size of 64K. On our GPU cluster, the
Baseline-LAMB processes 12.2K examples per second during Phase 1 and 4.6K
examples per second during Phase 2. This reduction in throughput arises because
Phase 2 has more computation due to increases sequence length. The speedup
numbers in the table are relative to this baseline. For instance, at 256 GPUs,
the baseline scales to a speedup of 3.789 (a perfect scaling would
be 4). Just as a comparison with published NVIDIA numbers~\cite{nvidiabert},
the baseline finishes \bertlarge in 260 minutes which is slower than 236 minutes
reported by NVIDIA. This difference is due to the NVIDA cluster having a
slightly more performant DGX-2H configuration using a higher clock speed
compared to our DGX-2 cluster.

Though \adasum performs more computation during allreduce, the reduction in throughput for 64 GPUs is less than 2\% for Phase 1 and less than 1\% for Phase 2. As we increase the number of GPUs, the additional computation results in lower scaling efficiency for Phase 1. For instance, \adasum incurs roughly a 5 \% reduction (13\% reduction) in throughput when compared to the baseline for Phase 1 on 256 (512) GPUs. On the other hand, the throughput for Phase 2 with \adasum shows similar scalability as the baseline as we increase the number of GPUs, with \adasum being faster sometimes. This is due to the fact that Phase 2 does more computation. 

The reduction in system efficiency is more than compensated by the 20
\% reduction in algorithmic efficiency. As such, \adasum achieves
faster time to accuracy than the baseline. In particular, \adasum
completes \bertlarge in 214 minutes, which is faster than NVIDA
reported numbers on 256 GPUs~\cite{nvidiabert}. On 512 GPUs, \adasum
reaches the desired accuracy in 118 minutes. We observed some of the
overhead is due to CUDA aware MPI (openmpi + ucx) was not as fast as
NCCL.  We are in the process of porting \adasum's allreduce to NCCL as
a consequence.

\subsection{Case Study: LeNet-5}

LeNet-5 is a model for the MNIST dataset that is commonly used as a tutorial for
neural networks. This pioneering image recognition model is very small by
today's standards. While speeding up training of LeNet-5 is not a priority, a benefit of evaluating \adasum on it is that \emph{it is practical to
do exhaustive hyperparameter search}.
This allows us to thoroughly explore the convergence properties of
\adasum{} using the following experimental setup. First, we will find a very
aggressive learning rate schedule for sequential training that barely reaches a
baseline accuracy. Now, keeping the number of epochs fixed, we can compare
different distributed training configurations by how far off they are from the
sequential accuracy.

We used the \torch version of LeNet-5 found in Horovod's examples with a
momentum based SGD optimizer with a batch size of 32. While the original example
has a fixed learning rate schedule of 10 epochs, we were able to bring this down
to 2 epochs using a linear warmup and decay from zero to zero.

The maximum accuracy of LeNet-5 on MNIST is in the 99.3\%-99.4\% range. Using
99.3\% as our target accuracy, we used an Azure cluster of NC-series VMs with
333 NVIDIA K80 GPUs to optimize the hyperparameters of the training script.  The
following configuration reliably reaches the target accuracy in sequential
training in 2 epochs (while no reliable configuration was found for 1.75
epochs):

\begin{center}
  \small
\begin{tabular}{@{}lll@{}}
Epochs & Max LR & Warmup \% \\
\midrule
2 & 0.0328 & 17\% \\
\end{tabular}
\end{center}

We keep the number of epochs constant in the following experiments. Since MNIST
has 60000 images, one GPU will take 1875 steps per epoch, while 32 GPUs would
take only 58 steps per epoch.

\begin{figure}
  \includegraphics[width=\columnwidth]{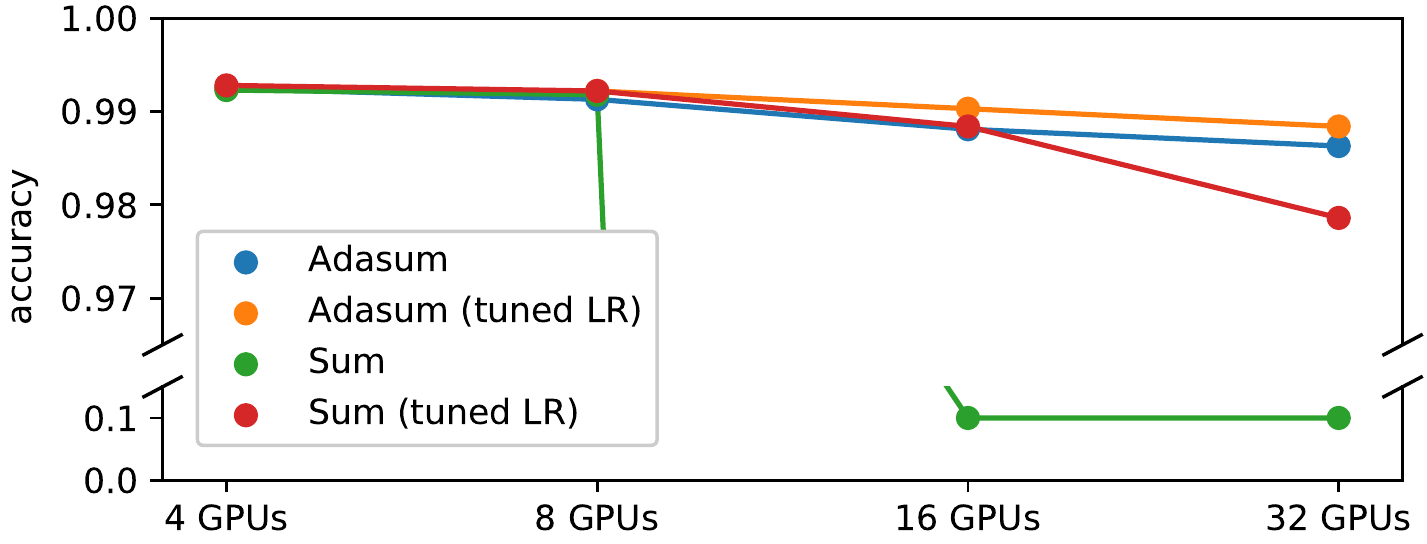}
  \caption{LeNet-5 accuracies under an aggressive sequential learning rate schedule for various distributed configurations. \label{fig:mnist}}
\end{figure}

We evaluated \adasum{} and Sum on 4, 8, 16 or 32 GPUs with both an unmodified
learning rate as well as an optimized one, which we searched for separately for
each combination of method and number of GPUs. Figure~\ref{fig:mnist} shows the
accuracies reached by each configuration under the aggressive learning rate
schedule for sequential training.

Without learning rate tuning Sum fails to converge at more than 8 GPUs, while
\adasum{} still converges at 32 GPUs \emph{without any hyperparameter search}.
This highlights the easy scalability that \adasum{} enables. Furthermore, even
with a tuned learning rate Sum is far below even untuned \adasum{} at 32 GPUs,
and is still beat by \adasum{} with a tuned learning rate at 16 GPUs.

Consider the tuned learning rates for each configuration:
\begin{center}
  \small
\begin{tabular}{@{}lllll@{}}
Method & 4 GPUs & 8 GPUs & 16 GPUs & 32 GPUs \\
\midrule
\adasum &   0.0328 & 0.0147 & 0.012 & 0.0204 \\
Sum & 0.0275 & 0.017 & 0.0089 & 0.0043 \\
\end{tabular}
\end{center}
For Sum going from 16 to 32 GPUs is coupled with a halving of the learning rate,
which means that the per iteration step size stays the same even though twice as
many GPUs are participating in each iteration. In contrast, \adasum{} can
maintain much higher learning rates at 16 and 32 GPUs.

\subsection{Case Study: Production Models}


The prior case studies demonstrate the efficacy of \adasum{} on public
models. Over the past three years, we have applied \adasum to a variety of production models from Company X\footnote{Name removed for anonymity.} 
and observed similar improvement in convergence with little or no hyperparameter tuning. 
Due to space constraints, we only provide a summary of these results to show the broad applicability of \adasum. 
For instance, the team relying on an LSTM-based model for predicting the next command a user is likely to issue in an application was able to use \adasum to train on four times the amount of data to result in 6\% improvement in downstream accuracy. Similarly, for a speech translation model with the Momentum optimizer, \adasum allowed the team to scale to 16 GPUs dramatically reducing the turn around time for their engineers. On a GPT-2 based model modified to work with code, \adasum provided 15\% faster time to accuracy. 


\section{Related Work}\label{sec:related}

Previous works for enabling large-batch training have focused on the problem of
adapting the learning rate appropriately. This is helpful because while large
batch sizes permit taking larger steps in many situations, the appropriate
learning rate changes during training.
Adam~\cite{adam} adjust the step size based on the variance of gradients, taking
smaller steps when variance is high.
LARS~\cite{lars} adapts learning rates per-layer using a \emph{trust ratio}
calculated as the ratio between the norm of the layer weights and the norm of
gradients, the intuition being that divergence happens when steps are large in
relation to the parameters being updated.
LAMB~\cite{lamb} can be seen as LARS applied to Adam instead of vanilla SGD.
These approaches that use statistical measures to adapt learning rate are
qualitatively different from \adasum, which exploits a specific property
(orthogonality) of gradients to take bigger steps when appropriate. \adasum and
learning rate adaption methods are in many cases complementary, as we have shown
in our experiments successfully combining \adasum with Adam and LAMB.

Asynchronous SGD~\cite{Dean2012,Chilimbi2014} approaches can address two issues in distributed synchronous
SGD: synchronization overhead of faster nodes having to wait for stragglers to
finish the iteration, and non-overlapping of compute and communication. However,
stale gradients present another potential source of degraded convergence. 
Specifically, the DC-ASGD algorithm~\cite{msra} addressed this staleness using an approximation 
of the Hessian as used in \adasum. 
They only use the diagonal elements of the $g \cdot g^T$ approximation of the Hessian and require  an additional hyperparameter which requires a careful tuning over time. It  was also only evaluated for SGD and Momentum-SGD. 
Our approach was motivated to be a drop-in replacement of the allreduce operation and thus we eliminate all hyperparameters in our combination and it is optimizer agnostic.
Similarly, Maleki et al.~\cite{ouripdps} use the Hessian to reduce staleness and use 
a Johnson-Lindenstrauss projection to get a low rank approximation of a semantics-preserving model combiner. 
Their approach only works with exact Hessian computation and is unlikely to scale to DNNs.

While large-batch training methods decrease the amount of communication needed,
\emph{gradient compression} approaches reduce the cost of each communication
round. In gradient quantization approaches gradients are cast to a lower
bit-width datatype for communication, with bit-widths ranging all the way down
to 1 bit~\cite{1BitSGD}. Low-rank compression methods communicate the most
important dimensions of gradients~\cite{PowerSGD}. Any lossy gradient
compression presents yet another potential source for loss of convergence.

\section{Conclusion}\label{sec:conclusion}
\adasum unlocks an unprecedented level of scalability for data
parallel distributed training of large models. Our case studies show
that \adasum scales \bertlarge-LAMB to 128k batch size, \bertlarge-Adam
to 64K batch size, and ResNet-50 Momentum to 64k batch size, while maintaining downstream accuracy.
\adasum is publicly available through open-source Horovod package~\cite{horovod}, can be simply enabled with an additional flag, and 
requires no additional hyper-parameters.


Finally, \adasum opens a new direction for future work on lightweight
ways to exploit orthogonality in model updates. We expect \adasum's
empirical observations to drive future work in adaptive learning rate
methods and further communication optimizations.




\bibliographystyle{plain}
\bibliography{refs}
  \appendix 
  \label{appendix}
  \section{Deriving sequential emulation}
  
Suppose $L_1(w)$ and $L_2(w)$ are two loss functions corresponding
to two different examples. Starting from model $w_0$, sequential
SGD, calculates $w_1=w_0 - \alpha \nabla L_1(w_0)$
followed by $w_2=w_1-\alpha \nabla L_2(w_1)$
where $\alpha$ is a properly set learning rate for both iteration. 
With forward substitution,
$w_2=w_0-\alpha (\nabla L_1(w_0) + \nabla L_2(w_1))$. 
Alternatively, $\nabla L_1(w_0)$ and 
$\nabla L_2 (w_0)$ (note that gradients are both
at $w_0$) are computed in parallel and $w$ is updated with $w_2'=w_0
-\alpha(\nabla L_1(w_0) + \nabla L_2(w_0))$. Clearly $w_2'$ and $w_2$
are different because $\nabla L_2$ was computed at a different point.

\subsection{Using Taylor Expansion}
\label{app:taylor}
\adasum uses an estimation for $\nabla L_2(w_1)$ using the Taylor expansion
to capture the effect of the first update on the second update. Note
that $\nabla L_2(w_1)$ is a convenient notation for the first
order derivative of $L_2$ and can be re-written with $\frac{\partial L_2}{\partial w}\Big|_{w_1}$.
Therefore:
\begin{equation}\label{eq:te}
\begin{split}
\nabla L_2(w_1)&=\frac{\partial L_2}{\partial w}\Big|_{w_1}=\frac{\partial L_2}{\partial w}\Big|_{w_0+(w_1-w_0)} = \frac{\partial L_2}{\partial w}\Big|_{w_0} \\
&+\frac{\partial^2 L_2}{(\partial w)^2}\Big|_{w_0}\cdot (w_1-w_0)  + O(\norm{w_1-w_0}^2) \\
&=\frac{\partial L_2}{\partial w}\Big|_{w_0}-\alpha \frac{\partial^2 L_2}{(\partial w)^2}\Big|_{w_0} \cdot \nabla L_1(w_0) \\
&+ \alpha^2 O(\norm{\nabla L_1(w_0)}^2)\\
&\approx \nabla L_2(w_0) - \alpha H_2(w_0) \cdot \nabla L_1(w_0)
\end{split}
\end{equation}
where the error is $\alpha^2O(\norm{\nabla L_1 (w_0)}^2)$ and $H_2(w_0)$ is the Hessian matrix of loss function $L_2$. 
The quadratic relationship between the error in Formula~\ref{eq:te} and the learning rate $\alpha$ helps
this approximation as in most training practices, learning rate decays with progress in training. However,
computing $H_2(w_0)$ requires significant computation powers as the size of this matrix is the number of model
parameters squared and with millions of parameters, even storing the matrix is infeasible. 

\cite{ggt} shows that for models with negative log likelihood loss functions (which is the case for
all models studied in this paper), the Hessian matrix can be approximated by the outer product of the gradients.
By using Equation~\ref{eq:te} and this approximation, $\nabla L_2(w_1)$ can be rewritten by:
\begin{equation} \label{eq:teggt}
\nabla L_2(w_1)\approx \nabla L_2(w_0) - \alpha \nabla L_2(w_0) \cdot \nabla L_2(w_0)^T \cdot \nabla L_1(w_0)
\end{equation}

\subsection{Choosing optimal learning rate}
\label{app:optimallr}
For this discussion, we assumed that $\alpha$ was properly chosen for the sequential
SGD algorithm. We use Taylor expansion for the loss function $L_2(w_0-\alpha \nabla L_2(w_0))$ and 
we will take its derivative with respect to $\alpha$ to find the optimal value:
\begin{equation}\label{eq:alpha}
\begin{split}
  &L_2(w_0-\alpha \nabla L_2(w_0)) \approx  L_2(w_0) - \alpha \nabla L_2(w_0)^T\cdot L_2(w_0) \\
  &                                + \frac{\alpha^2}{2} \nabla L_2(w_0)^T \cdot H_2(w_0) \cdot \nabla L_2(w_0)\\
  &\implies \frac{\partial L_2(w_0-\alpha \nabla L_2(w_0))}{\partial \alpha} = -\nabla L_2(w_0)^T\cdot L_2(w_0)\\
  & + \alpha \nabla L_2(w_0)^T \cdot H_2(w_0) \cdot \nabla L_2(w_0) = 0 \\
  &\implies \alpha \norm{\nabla L_2(w_0)}^4 = \norm{\nabla L_2(w_0)}^2 \implies \alpha = \frac{1}{\norm{\nabla L_2(w_0)}^2}
\end{split}
\end{equation}
where the last line is derived from the approximating for the Hessian matrix.
By putting together Equation~\ref{eq:alpha} and
\ref{eq:teggt}, $\nabla L_2(w_1)$ can be approximated by:
\begin{equation} \label{eq:adasumpre}
  \nabla L_2(w_1)\approx \nabla L_2(w_0) - \frac{\nabla L_2(w_0) \cdot \nabla L_2(w_0)^T}{\norm{\nabla L_2(w_0)}^2} \nabla L_1(w_0)
\end{equation}

Therefore, to approximate the sequential SGD semantics in a parallel setting, \adasum uses:
\begin{equation} \label{eq:adasum}
  \begin{split}
  w_2&=w_0-\alpha (\nabla L_1(w_0) + \nabla L_2(w_1)) \approx w_0 \\
  & - \alpha (\nabla L_1(w_0) + \nabla L_2(w_0) - \frac{\nabla L_2(w_0) \cdot \nabla L_2(w_0)^T}{\norm{\nabla L_2(w_0)}^2} \nabla L_1(w_0))\\
  &=w_0 - \alpha (g_1 + g_2 - \frac{g_2\cdot g_2^T}{\norm{g_2}^2}g_1)
\end{split}
\end{equation}
where in the last equality, $\nabla L_1(w_0)$ was replaced by $g_1$ and $\nabla L_2(w_0)$
by $g_2$ for simplicity (note that the gradients in the last equality are all from $w_0$). 
Equation~\ref{eq:asyncadasum} is derived from Equation~\ref{eq:adasum} by rearranging the
terms.

\subsection{Convergence Proof for \adasum}
\label{sec:conv}
\cite{conv} discusses the requirements for a training algorithm to converge to its optimal answer. Here we will present a simplified version of Theorem 1 and Corollary 1 from~\cite{conv}. 

Suppose that there are $N$ training examples for a model with loss functions $L_1(w),\dots,L_N(w)$ where $w$ is the model
parameter and $w_0$ is the initial model.  
Define $L(w)=\frac{1}{N}\sum_i L_i(w)$. Also assume that $w^*$ is the optimal model where $L(w^*)\leq L(w)$ for all $w$s. A training algorithm is {\em pseudogradient} if:
\begin{itemize}
	\item It is an iterative algorithm where $w_{i+1} = w_i-\alpha_i h_i$ where $h_i$ is a random vector and $\alpha_i$ is a scalar.
	\item $\forall \epsilon \exists \delta: E(h_i)^T\cdot \nabla L(w)\geq \delta>0$ where $L(w)\geq L(w^*)+\epsilon$ and $w^*$ is the optimal model.
	\item $E(\norm{h_i}^2)<C$ where $C$ is a constant.
	\item $\forall i: \alpha_i\geq 0$, $\sum_i \alpha_i=\inf$, and $\sum_i \alpha_i^2 < \inf$.
\end{itemize}

The following Theorem is taken from~\cite{conv}.
\begin{theorem}\label{conv-theorem}
A pseudogradient training algorithm converges to the optimal model $w^*$.
\end{theorem}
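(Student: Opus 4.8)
The plan is to treat this as a stochastic approximation result and establish convergence through an almost-supermartingale (Robbins--Siegmund) argument on a suitable Lyapunov function. Because the pseudogradient alignment condition is phrased in terms of the gradient $\nabla L$, the natural potential is the suboptimality $V_i = L(w_i) - L(w^*) \geq 0$, and I would carry along the standing smoothness assumption implicit in~\cite{conv} that $\nabla L$ is Lipschitz with some constant $M$, so that the descent lemma applies. First I would fix the filtration $\mathcal{F}_i$ generated by $w_0,\dots,w_i$ and note that $w_{i+1}-w_i = -\alpha_i h_i$.

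The key one-step estimate comes from the descent lemma, $L(w_{i+1}) \leq L(w_i) - \alpha_i \nabla L(w_i)^T h_i + \tfrac{M}{2}\alpha_i^2 \norm{h_i}^2$. Taking conditional expectation given $\mathcal{F}_i$ and using the bounded-second-moment condition $E(\norm{h_i}^2) < C$, I obtain $E(V_{i+1}\mid\mathcal{F}_i) \leq V_i - \alpha_i\, \nabla L(w_i)^T E(h_i\mid\mathcal{F}_i) + \tfrac{MC}{2}\alpha_i^2$. Since $\sum_i \alpha_i^2 < \infty$, the perturbation term $\tfrac{MC}{2}\alpha_i^2$ is summable, which is precisely the hypothesis of the Robbins--Siegmund almost-supermartingale theorem. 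Applying it yields two facts simultaneously: $V_i$ converges almost surely to a finite limit $V_\infty\geq 0$, and the drift series $\sum_i \alpha_i\, \nabla L(w_i)^T E(h_i\mid\mathcal{F}_i)$ is finite almost surely.

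It then remains to show $V_\infty = 0$. Here I would argue by contradiction: if $V_\infty > 0$ on an event of positive probability, then $V_i = L(w_i)-L(w^*)\geq \epsilon$ for some $\epsilon>0$ and all large $i$, so the pseudogradient condition supplies a $\delta>0$ with $\nabla L(w_i)^T E(h_i\mid\mathcal{F}_i) \geq \delta$ for those indices; combined with $\sum_i \alpha_i = \infty$ this forces $\sum_i \alpha_i\, \nabla L(w_i)^T E(h_i\mid\mathcal{F}_i) = \infty$, contradicting the finiteness just established. Hence $L(w_i)\to L(w^*)$, and under the standing assumption that $w^*$ is the (essentially unique) minimizer this upgrades to $w_i\to w^*$.

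The step I expect to be the main obstacle is controlling the drift term $\alpha_i\,\nabla L(w_i)^T E(h_i\mid\mathcal{F}_i)$ near the optimum. The alignment condition guarantees this term is strictly positive only \emph{away} from $w^*$, and its sign is not pinned down inside the $\epsilon$-ball, so it cannot be treated as a clean nonnegative decrement everywhere; the role of Robbins--Siegmund is precisely to absorb this ambiguity, while the contradiction argument above is what converts ``infinitely many near-optimal iterates with summable excursions'' into genuine convergence. A secondary subtlety, and the reason this is billed as a \emph{simplified} version of~\cite{conv}, is the passage from convergence of the loss to convergence of the parameters, which relies on the minimizer structure (uniqueness or an identifiability condition) that I would import from the full theorem rather than reprove.
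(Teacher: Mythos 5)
You should know at the outset that the paper contains no proof of this theorem: it is imported verbatim from~\cite{conv} (``The following Theorem is taken from~\cite{conv}''), so the comparison is with the classical argument behind that citation rather than with any text in the paper. Your reconstruction is essentially that classical argument --- the Robbins--Siegmund almost-supermartingale route is the standard proof technique for Polyak-style pseudogradient theorems --- and it is sound: the descent-lemma step, the summability of the $\tfrac{MC}{2}\alpha_i^2$ perturbation via $\sum_i \alpha_i^2 < \infty$, and the contradiction pitting $\sum_i \alpha_i = \infty$ against the uniform $\delta(\epsilon)$ are all correct. Two remarks. First, the step you flag as the main obstacle is actually a non-issue under the hypotheses as stated: the alignment condition is quantified as ``for all $\epsilon$ there exists $\delta$,'' so at \emph{every} $w$ with $L(w) > L(w^*)$ the drift $\nabla L(w)^T E(h_i \mid \mathcal{F}_i)$ is strictly positive (take $\epsilon = L(w) - L(w^*)$), while at any $w$ with $L(w) = L(w^*)$ that point is a global minimizer and differentiability forces $\nabla L(w) = 0$; hence the decrement $b_i = \alpha_i\, \nabla L(w_i)^T E(h_i \mid \mathcal{F}_i)$ is nonnegative everywhere, Robbins--Siegmund applies with its hypotheses met exactly rather than ``absorbing an ambiguity,'' and no sign control inside the $\epsilon$-ball is needed. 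Second, your two imports are legitimate and worth making explicit precisely because the paper's simplified hypothesis list omits them: some smoothness of $L$ (Lipschitz gradient) is genuinely required for the one-step estimate, and the listed conditions by themselves yield only $L(w_i) \to L(w^*)$ almost surely --- convergence of the iterates $w_i$ to $w^*$ itself needs the uniqueness or identifiability structure you correctly defer to~\cite{conv}. With those caveats stated, your argument is a correct, self-contained substitute for the citation.
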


In this section, we assume that the true gradient, $\nabla L$ is bounded at any point.
As a reminder, $\adasum(g_1,g_2)=(1 - \frac{g_1 \cdot g_1^T}{2 \cdot \norm{g_1}^2}) \cdot g_2 + (1 - \frac{g_2 \cdot g_2^T}{2 \cdot \norm{g_2}^2}) \cdot g_1$.
As discussed in Section~\ref{sec:binarytree} \adasum operator reduces $N$ gradients in a binary tree manner.
We will prove that the final gradient has all necessary requirements of pseudogradient.
First, we discuss the inner product of \adasum final vector with $\nabla L(w)$:

\begin{lemma}\label{lem:angle}
  Suppose $X=\{x_1,\dots,x_N\}$ is a random variable distribution. For all $a$ and $b$ independently
  chosen from $X$, let's define $Y=\adasum(a,b)$. Assume that $\theta$ is the angle between $E(X)$ and $E(Y)$.
  $\cos\theta>0.942$.
\end{lemma}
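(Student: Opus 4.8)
The plan is to evaluate $E(Y)$ exactly as a fixed linear transform of $\mu:=E(X)$, and then reduce the angle bound to a Kantorovich-type inequality for a symmetric positive-definite matrix. Throughout I assume $\mu\neq 0$, so the angle is well defined, and $a\neq 0$ almost surely so the operator is defined.

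First I would rewrite the operator in matrix form, $\adasum(a,b)=\bigl(I-\tfrac{aa^T}{2\norm{a}^2}\bigr)b+\bigl(I-\tfrac{bb^T}{2\norm{b}^2}\bigr)a$, which one checks is identical to the scalar-coefficient definition. Taking expectations and using linearity, the independence of $a$ and $b$ (so that $E_b[b]=\mu$ pulls out after conditioning on $a$), and the symmetry of the two summands under $a\leftrightarrow b$, I obtain
\[ E(Y)=2\Bigl(\mu-\tfrac12 P\mu\Bigr)=(2I-P)\mu,\qquad P:=E\Bigl[\tfrac{aa^T}{\norm{a}^2}\Bigr]. \]
The matrix $P$ is the expectation of the orthogonal projections onto the random directions $\mathrm{span}(a)$; since each summand satisfies $0\preceq\frac{aa^T}{\norm{a}^2}\preceq I$ and this set of matrices is convex, we get $0\preceq P\preceq I$, i.e. the spectrum of $P$ lies in $[0,1]$. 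Consequently $M:=2I-P$ is symmetric positive definite with all eigenvalues in $[1,2]$.

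Next I would express the cosine directly as $\cos\theta=\frac{\mu^T M\mu}{\norm{\mu}\,\norm{M\mu}}$, which is positive since $M$ is positive definite (so the angle is acute and the sign is correct). Diagonalizing $M$ and writing $\mu=\sum_i c_i v_i$ in its eigenbasis with eigenvalues $\lambda_i\in[1,2]$, this becomes $\cos^2\theta=\frac{(\sum_i\lambda_i c_i^2)^2}{(\sum_i c_i^2)(\sum_i\lambda_i^2 c_i^2)}=\frac{(E_p[\lambda])^2}{E_p[\lambda^2]}$, where $p_i\propto c_i^2$ is a probability distribution supported on $\{\lambda_i\}\subseteq[1,2]$. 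Minimizing this over all such distributions is the content of the Kantorovich inequality: the extremum is attained by a two-point mass on the spectral endpoints, and for a spectrum in $[m,L]$ it yields $\cos\theta\ge\frac{2\sqrt{mL}}{m+L}$. With $m=1$ and $L=2$ this gives $\cos\theta\ge\frac{2\sqrt2}{3}\approx 0.9428>0.942$, exactly the stated bound.

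The main obstacle is this final optimization. The computation of $E(Y)$ is routine once independence and symmetry are used carefully, and the spectrum of $P$ follows from it being an average of rank-one projections. The delicate part is proving $\frac{(E_p[\lambda])^2}{E_p[\lambda^2]}\ge\frac{4mL}{(m+L)^2}$; I would establish it by a one-variable reduction, arguing that for fixed mean $E_p[\lambda^2]$ is maximized by the two-point distribution on $\{m,L\}$, then maximizing the resulting ratio over the single mass parameter to locate the optimum at $u=\frac{m(L-m)}{L+m}$, where the ratio equals $\frac{4mL}{(m+L)^2}$. That the bound $0.942$ is essentially $\frac{2\sqrt2}{3}$ is a reassuring sanity check that the spectral window $[1,2]$ is the right one.
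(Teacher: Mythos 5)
Your proof is correct, and while its first half coincides with the paper's, the key bounding step takes a genuinely different route. Like the paper, you use independence to get $E(Y)=(2I-P)\,E(X)$ with $P=E\bigl[\frac{aa^T}{\norm{a}^2}\bigr]$ (the paper's Equation for $E(Y)$ is exactly this). From there the paper fixes a single $a$, computes the angle between $r=E(X)$ and $\bigl(2I-\frac{aa^T}{\norm{a}^2}\bigr)r$ as a one-variable function $\cos\eta=\frac{2-\cos^2\gamma}{\sqrt{4-3\cos^2\gamma}}$ of the angle $\gamma$ between $a$ and $r$, minimizes by calculus to get $\approx 0.9428$, and then asserts that since $E(Y)$ is an average of such vectors the bound survives averaging --- a step it leaves informal (it can be justified by convexity of the cone of half-angle $0.108\pi$ about $r$, but the paper does not say this). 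You instead bound the spectrum of the averaged operator directly: $0\preceq P\preceq I$ because $P$ is a mixture of rank-one orthogonal projections, so $M=2I-P$ is symmetric with eigenvalues in $[1,2]$, and the Kantorovich-type inequality $\frac{(\mu^T M\mu)^2}{\norm{\mu}^2\norm{M\mu}^2}\geq\frac{4mL}{(m+L)^2}$ with $m=1$, $L=2$ gives $\cos\theta\geq\frac{2\sqrt{2}}{3}\approx 0.9428>0.942$ in one stroke. Your reduction details check out: in the eigenbasis the ratio is $(E_p[\lambda])^2/E_p[\lambda^2]$ for a spectral probability measure $p$ on $[1,2]$, the chord bound $\lambda^2\leq(m+L)\lambda-mL$ shows the extremal $p$ is two-point on the endpoints, and optimizing the remaining mass parameter yields the harmonic-mean location and the stated constant. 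It is no coincidence that both routes produce the identical constant $\frac{2\sqrt{2}}{3}$: the Kantorovich extremum is attained by a two-point spectral distribution on $\{1,2\}$, which is exactly the spectrum of the single rank-one case $2I-\frac{aa^T}{\norm{a}^2}$ the paper analyzes. What your approach buys is rigor and generality --- it handles the averaged matrix without any cone-averaging argument and extends verbatim to any spectral window $[m,L]$ --- at the cost of invoking (or reproving) Kantorovich, whereas the paper's argument is more elementary per-sample calculus plus an averaging claim it does not fully justify.
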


\begin{proof}
  \begin{equation}\label{eq:expectation}
  \begin{split}
    &E(Y)=E\big(\adasum(a,b)\big)= E\Big((1 - \frac{a \cdot a^T}{2 \cdot \norm{a}^2}) \cdot b \\
    &+ (1 - \frac{b \cdot b^T}{2 \cdot \norm{b}^2}) \cdot a\Big) = E(a)+E(b)-E\Big(\frac{a \cdot a^T}{2 \cdot \norm{a}^2}\Big)\cdot E(b)\\
    &-E\Big(\frac{b \cdot b^T}{2 \cdot \norm{b}^2}\Big)\cdot E(a)=2E(X)-E\Big(\frac{a \cdot a^T}{\cdot \norm{a}^2}\Big)\cdot E(X)
  \end{split}
  \end{equation}
  where the last equation comes from the independence of $a$ and $b$. Next we will calculate,
   $\eta$, the angle between $2E(X)-\frac{a \cdot a^T}{\cdot \norm{a}^2}\cdot E(X)$ for some arbitrary $a$. 
   First let's denote $E(X)$ with $r$ and assume the angle between $r$ and $a$ is $\gamma$.
  By using the property of inner product, we have:
  \begin{equation}
    \begin{split}
      &\cos\eta=\frac{r^T\cdot(2r-\frac{a \cdot a^T}{\cdot \norm{a}^2}\cdot r)}{\norm{r}\cdot\norm{2r-\frac{a \cdot a^T}{\cdot \norm{a}^2}\cdot r}}\\
      &=\frac{2\norm{r}^2-\norm{r}^2(\cos\gamma)^2}{\norm{r}\cdot\sqrt{4\norm{r}^2+\norm{r}^2(\cos\gamma)^2-4\norm{r}^2(\cos\gamma)^2}}\\
      &=\frac{2-(\cos\gamma)^2}{\sqrt{4-3(\cos\gamma)^2}}\\
    \end{split}
    \end{equation}
    By taking a derivative of $\gamma$ from the last equation, we find the minimum value of 
    $\cos\eta$ to be $\approx 0.9428$ which concludes that $eta$ is at most $0.108\pi$. 
    Since in Formula~\ref{eq:expectation}, $E(Y)$ is calculated over an average of all possible
    $a$ vectors, we can still guarantee that $E(Y)$ and $E(X)$ have at most an angle of $0.108\pi$
    since we derived this value for the worst case scenario.
  \end{proof}

  \begin{lemma}\label{lem:norm}
    With same assumptions as in Lemma~\ref{lem:angle}, $\norm{E(X)}\leq\norm{E(Y)}\leq2\norm{E(X)}$.
  \end{lemma}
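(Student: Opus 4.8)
The plan is to build directly on the computation already carried out in the proof of Lemma~\ref{lem:angle}. There we established that
\begin{equation*}
  E(Y) = 2E(X) - E\Big(\frac{a \cdot a^T}{\norm{a}^2}\Big)\cdot E(X) = (2I - P)\,E(X),
\end{equation*}
where I write $P = E\big(\frac{a\cdot a^T}{\norm{a}^2}\big)$ and $I$ for the identity. So the entire lemma reduces to understanding how the symmetric matrix $2I - P$ stretches the fixed vector $E(X)$, and the whole argument is spectral.

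The key observation is that for each realization $a$, the matrix $\frac{a\cdot a^T}{\norm{a}^2}$ is the orthogonal projection onto the line spanned by $a$: it is symmetric, and for any unit vector $u$ its quadratic form is $u^T\frac{a\cdot a^T}{\norm{a}^2}u = \frac{(a^T u)^2}{\norm{a}^2} \in [0,1]$. Taking expectations and using linearity, $P$ is symmetric and $u^T P u = E\big(\frac{(a^T u)^2}{\norm{a}^2}\big)\in[0,1]$ for every unit $u$, so $0 \preceq P \preceq I$. Consequently $2I - P$ is symmetric with every eigenvalue lying in $[1,2]$.

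With that in hand, I would finish by applying the standard spectral bound for a symmetric matrix $M$ with eigenvalues in $[\lambda_{\min},\lambda_{\max}]$, namely $\lambda_{\min}\norm{v}\le\norm{Mv}\le\lambda_{\max}\norm{v}$ for all $v$ (diagonalize $M$ in an orthonormal eigenbasis and compute $\norm{Mv}^2 = \sum_i \lambda_i^2 v_i^2$). Setting $M = 2I-P$ and $v = E(X)$ yields
\begin{equation*}
  \norm{E(X)} \le \norm{(2I-P)E(X)} = \norm{E(Y)} \le 2\,\norm{E(X)},
\end{equation*}
which is exactly the claim.

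I do not expect a genuinely hard step here; the argument is short once the projection structure is recognized. The one place that warrants care is the passage from "each $\frac{a\cdot a^T}{\norm{a}^2}$ has spectrum in $[0,1]$" to "$P$ has spectrum in $[0,1]$": this is not because eigenvalues average (they need not), but because the \emph{quadratic form} $u^TPu$ averages, and for a symmetric matrix the eigenvalue range is exactly the range of $u^TPu$ over unit $u$. The symmetry of $P$ is also what licenses the final eigenvalue-based norm inequality, so I would be sure to state that $P$, and hence $2I-P$, is symmetric before invoking the spectral bound.
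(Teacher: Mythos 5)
Your proof is correct and follows essentially the same route as the paper: both write $E(Y)=(2I-P)\,E(X)$ with $P=E\big(\frac{a\cdot a^T}{\norm{a}^2}\big)$, observe that $2I-P$ is symmetric with eigenvalues in $[1,2]$, and conclude via the spectral bound $\lambda_{\min}\norm{v}\leq\norm{Mv}\leq\lambda_{\max}\norm{v}$. The only difference is that you supply the justification the paper dismisses as ``trivial to check''---namely that $0\preceq P\preceq I$ follows from averaging the quadratic forms of the rank-one projections rather than from any (invalid) averaging of eigenvalues---which is a welcome tightening, not a departure.
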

  \begin{proof}
    As discussed in Lemma~\ref{lem:angle}, $E(Y)=2E(X)-E(\frac{a \cdot a^T}{\cdot \norm{a}^2})\cdot E(X)
    =(2I-E(\frac{a \cdot a^T}{\cdot \norm{a}^2})\cdot E(X)$. It is trivial to check that the matrix
     $(2I-E(\frac{a \cdot a^T}{\cdot \norm{a}^2})$ is symmetric with eigenvalues between $1$ and $2$.
    Therefore, $\lambda_{\min}\norm{X}\leq\norm{Y}\leq\lambda_{\max}\norm{X}$ where $\lambda_{\min}$ 
    and $\lambda_{\max}$ are respectively the smallest and largest eigenvalues of the aforementioned matrix.
  \end{proof}

{\bf Assumption:} Lemma~\ref{lem:angle} showed that in the worst case  $E(Y)$ can rotate at most $0.108\pi$ 
with respect to $E(X)$. Even meeting this worst case requires carefully crafted $x_i$s. If \adasum
was applied recursively on $X$ ($Y=\adasum(X,X),Z=\adasum(Y,Y),\dots$) for $k$ times, the expected value 
of final distribution will at most have an angle of $0.108k\pi$ which is only possible if each $\adasum$
meets the worst case scenario and each worst case is stacked over the previous one. As one can imagine,
this is an extremely unlikely scenario.  
In case $x_i$s are gradients, we assume that $\adasum$ recursively always keeps the angle with $E(X)$ to at most
$\sigma$ where $\cos \sigma > 0$. Using this assumption and Lemma~\ref{lem:norm}, we can prove that $\adasum$ algorithm is a pseudogradient training algorithm.

\begin{theorem}
  $\adasum$ algorithm applied in an iterative manner using a proper learning rate on a set of $N$ gradients, 
  $G=\{g_1,\dots,g_N\}$ computed in parallel, is a pseudogradient training algorithm.
\end{theorem}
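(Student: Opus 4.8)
The plan is to verify directly the four defining properties of a pseudogradient algorithm for the update vector $h_i = \adasum(g_{[1,N]})$, the result of reducing the $N$ per-minibatch gradients in the binary-tree fashion of Section~\ref{sec:binarytree}. The first property (the iterative form $w_{i+1} = w_i - \alpha_i h_i$) holds by construction, since $\adasum$ collapses the $N$ gradients into a single update direction. The fourth property is a condition on the learning-rate schedule alone and is granted by the hypothesis of a ``proper learning rate'' (e.g. $\alpha_i = 1/i$), so it requires no work beyond naming the schedule. The substance of the proof lies in the second and third properties.

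For the second property I would first observe that, since each $g_j$ is the gradient of a uniformly sampled minibatch, the gradient distribution $X$ satisfies $E(X) = \nabla L(w)$. The recursive angle assumption stated just before the theorem guarantees that the angle between $E(h_i)$ and $E(X) = \nabla L(w)$ is at most $\sigma$ with $\cos\sigma > 0$; this is the recursive closure of Lemma~\ref{lem:angle}. Applying Lemma~\ref{lem:norm} at every level of the tree (where the two halves are built from disjoint minibatches and so are independent and identically distributed) gives $\norm{E(h_i)} \geq \norm{E(X)} = \norm{\nabla L(w)}$. Combining the two facts,
$$ E(h_i)^T \cdot \nabla L(w) \;\geq\; \norm{E(h_i)}\,\norm{\nabla L(w)}\cos\sigma \;\geq\; \cos\sigma \,\norm{\nabla L(w)}^2 . $$
It then remains to bound $\norm{\nabla L(w)}$ away from zero on the region $L(w) \geq L(w^*) + \epsilon$; under the standing assumption that the true gradient does not vanish off the optimum, there is $m(\epsilon) > 0$ with $\norm{\nabla L(w)}^2 \geq m(\epsilon)$, so we may take $\delta = m(\epsilon)\cos\sigma$.

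For the third property I would obtain a pathwise norm bound. Using the matrix form $\adasum(a,b) = (I - \tfrac12 P_a)b + (I - \tfrac12 P_b)a$, where $P_a = aa^T/\norm{a}^2$ is the orthogonal projector onto $a$, each factor $I - \tfrac12 P$ has eigenvalues in $\{\tfrac12, 1\}$ and hence operator norm $1$, so $\norm{\adasum(a,b)} \leq \norm{a} + \norm{b}$. Propagating this bound up the $N$-leaf reduction tree gives $\norm{h_i} \leq \sum_{j=1}^N \norm{g_j}$, whence by Cauchy--Schwarz $E(\norm{h_i}^2) \leq N \sum_j E(\norm{g_j}^2)$, which is finite because $N$ is fixed and each minibatch gradient has bounded second moment.

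The main obstacle is the second property, and within it the step that turns the geometric control of direction and norm into a uniform positive $\delta$. The angle and norm lemmas only control a single $\adasum$ application in expectation; extending them through the recursion is exactly what the stated assumption $\cos\sigma > 0$ postulates rather than derives, and the final lower bound on $\norm{\nabla L(w)}$ away from the optimum is an additional regularity hypothesis on the loss rather than a consequence of $\adasum$ itself. Making both of these precise, as opposed to the mechanical checking of properties one, three, and four, is where the real content sits.
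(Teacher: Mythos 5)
Your proposal is correct and, for three of the four properties, mirrors the paper's proof: both treat the iterative form and the learning-rate condition as immediate, and both derive the positive inner product from the recursive angle assumption $\cos\sigma>0$ together with the lower bound $\norm{E(h_i)}\geq\norm{\nabla L(w)}$ from Lemma~\ref{lem:norm}. Where you genuinely diverge is on the second-moment condition, and your route is arguably more faithful to the definition. The paper applies Lemma~\ref{lem:norm} up the tree to get $\norm{E(G)}\leq\norm{E(G_{\log N})}\leq N\norm{E(G)}$ and declares the norm requirement met, but this bounds only the norm of the \emph{expectation}, whereas the pseudogradient definition demands $E(\norm{h_i}^2)<C$; bounding $\norm{E(h_i)}$ does not control the second moment. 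Your pathwise argument --- writing $\adasum(a,b)=(I-\tfrac12 P_a)b+(I-\tfrac12 P_b)a$ with $P_a=aa^T/\norm{a}^2$, noting each factor has operator norm $1$, propagating $\norm{\adasum(a,b)}\leq\norm{a}+\norm{b}$ up the tree, and closing with Cauchy--Schwarz to get $E(\norm{h_i}^2)\leq N\sum_j E(\norm{g_j}^2)$ --- actually proves the stated condition, at the cost of assuming bounded second moments of the minibatch gradients rather than only boundedness of the true gradient as the paper assumes. You are also more careful on the inner-product property: the paper concludes from $\norm{E(G)}>0$ off the optimum, which gives pointwise but not uniform positivity, while you correctly observe that the definition's $\forall\epsilon\,\exists\delta$ quantification requires $\norm{\nabla L(w)}^2\geq m(\epsilon)>0$ on the region $L(w)\geq L(w^*)+\epsilon$, and you rightly flag this as an additional regularity hypothesis rather than a consequence of \adasum. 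In short, your version is a strict tightening of the paper's proof: same skeleton, but with the moment bound actually established and the hidden uniformity hypothesis made explicit.
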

\begin{proof}
  Given that \adasum follows the iterative method of $SGD$, the first assumption of a pseudogradient training algorithm is
  met. Also, since we use the learning rate schedule from the converging sequential SGD, 
  the requirement for the learning rate is trivially met. Section~\ref{sec:binarytree} discussed how \adasum
  reduces all gradients in a binary tree manner which has $\log N$ steps. The distribution of 
  the leaf level in this binary tree is $G$ and the next level's distribution is $G_1=\adasum(G,G)$. Level $i$'s
  distribution is $G_{i+1}=\adasum(G_i,G_i)$. At the top of the tree, we have $G_{\log N}$ as the distribution. Using
  Lemma~\ref{lem:norm}, $\norm{E(G)}\leq\norm{E(G_{\log N})}\leq 2^{\log N} \norm{E(G)}=N \norm{E(G)}$. Since $E(G)$
  is the true gradient ($\nabla L(w_i)$), $\norm{E(G)}$ is bounded by assumption and therefore, so is $\norm{E(G_{\log N})}$.
  This meets the requirement for the norm of the $h_i$s. Finally, Lemma~\ref{lem:norm} proves that $E(G)\leq \norm{E(G_{\log N})}$ and therefore  $E(G_{\log N})^T\dot E(G)\geq \norm{E(G_{\log N})}\norm{E(G)}\cos\sigma\geq \norm{E(G)}^2\cos\sigma$.
  For any $w_i$ which is not $w^*$, $\norm{E(G)}>0$ and based on the assumption, $\cos \sigma>0$. Therefore, the positive
  inner product assumption is also met which concludes that \adasum is a pseudogradient training algorithm and it converges.
\end{proof}

\subsection{\adasum Convergence Rate}
Convergence rate of \adasum is highly dependent on orthogonality of the gradients. In the worst case scenario if
all gradients are parallel, the algorithm converges in $1/N$ rate of the sequential SGD where $N$ is the number of processors and in the best case where all of the gradients are orthogonal, we expect \adasum to converge
as fast the sequential SGD. 


\subsection{Convergence of \adasum with Learning Rate Optimizers}
The kernel of the computation in all optimizers such as Adam, LAMB or LARS is 
computing the gradients. These optimizers differ by having different learning rate
mechanisms for each parameter. Generally one can think of having a dynamic
learning rate per layer for each of these optimizers. At each iteration $i$ 
and for each layer $l$, let's define $\Delta_i^l=w_{i+1}^l-w_{i}^l$
to be the update term for layer $l$. If $g_i^l$ is the corresponding 
gradient at iteration $i$, $\Delta_i^l\approx C_i^l \cdot g_i^l$ where $C_i^l$
is a scalar. This approximation is true in expectation for all optimizers
that use an unbiased gradient.

\adasum reduces all $\Delta_i^l$ across all GPUs for each layer and since
$\Delta_i^l$ is approximately a constant multiplied by the gradient,
\adasum applied on $\Delta_i^l$s is the same as applying it on $g_i^l$ scaled
by the same constant.

\end{document}